\def\ps@headings{%
\def\@oddhead{\mbox{}\scriptsize\rightmark \hfil \thepage}%
\def\@evenhead{\scriptsize\thepage \hfil \leftmark\mbox{}}%
\def\@oddfoot{}%
\def\@evenfoot{}}
\newcommand{\beq}{\begin{equation}}
\newcommand{\eeq}{\end{equation}}
\newcommand{\bea}{\begin{eqnarray}}
\newcommand{\eea}{\end{eqnarray}}
\long\def\symbolfootnote[#1]#2{\begingroup%
\def\thefootnote{\fnsymbol{footnote}}\footnote[#1]{#2}\endgroup}
\def\real{{\mathchoice%
{\hbox{\rm\setbox1=\hbox{I}\copy1\kern-.45\wd1 R}}
{\hbox{\rm\setbox1=\hbox{I}\copy1\kern-.45\wd1 R}}
{\hbox{\scriptsize\rm\setbox1=\hbox{I}\copy1\kern-.45\wd1 R}}
{\hbox{\scriptsize\rm\setbox1=\hbox{I}\copy1\kern-.45\wd1 R}}}}
\def\Zint{{\mathchoice{\setbox1=\hbox{\sf Z}\copy1\kern-.75\wd1\box1}
{\setbox1=\hbox{\sf Z}\copy1\kern-.75\wd1\box1}
{\setbox1=\hbox{\scriptsize\sf Z}\copy1\kern-.75\wd1\box1}
{\setbox1=\hbox{\scriptsize\sf Z}\copy1\kern-.75\wd1\box1}}}
\newcommand{\complex}{ \hbox{\rm C\kern-0.45em\rule[.07em]{.02em}{.58em}%
\kern 0.43em}}
\newcommand{\be}{\begin{equation}}
\newcommand{\ee}{\end{equation}}
\newcommand{\beqr}{\begin{eqnarray}}
\newcommand{\eeqr}{\end{eqnarray}}
\newcommand{\beqrx}{\begin{eqnarray*}}
\newcommand{\eeqrx}{\end{eqnarray*}}
\newcommand{\ba}{\left[ \begin{array}}
\newcommand{\ea}{\\ \end{array} \right]}
\newcommand{\bi}{\begin{itemize}}
\newcommand{\ei}{\end{itemize}}
\newtheorem{lemma}{Lemma}
\newtheorem{theorem}{Theorem}
\newtheorem{definition}{Definition}
\newcommand{\x}{\mathbf{x}}
\newcommand{\y}{\mathbf{y}}
\newcommand{\w}{\mathbf{w}}
\begin{document}

\title{Compressive Sensing over Graphs}

\author{Weiyu Xu \ \ \ \ Enrique Mallada \ \ \ \ Ao Tang
\\
Cornell University, Ithaca, NY 14853, USA} \maketitle

\begin{abstract}
\boldmath
In this paper, motivated by network inference and tomography applications, we study the problem of compressive sensing for sparse signal vectors over graphs. In particular, we are interested in recovering sparse vectors representing the properties of the edges from a graph. Unlike existing compressive sensing results, the collective additive measurements we are allowed to take must follow connected paths over the underlying graph. For a sufficiently connected graph with $n$ nodes, it is shown that, using $O(k \log(n))$ path measurements, we are able to recover any $k$-sparse link vector (with no more than $k$ nonzero elements), even though the measurements have to follow the graph path constraints. We further show that the computationally efficient $\ell_1$ minimization can provide theoretical guarantees for inferring such $k$-sparse vectors with $O(k \log(n))$ path measurements from the graph.

\end{abstract}
\IEEEpeerreviewmaketitle

\section{Introduction}
  In operations of communication networks, we are often interested in inferring and monitoring the network performance characteristics,
  such as delay and packet loss rate, associated with each link. However, making direct measurements and monitoring for each link can be
   costly and operationally difficult, often requiring the participation from routers or potentially unreliable middle network nodes.
   Sometimes the responses from the middle network nodes are unavailable due to physical or protocol constraints.
   This raises the question of whether it is possible to quickly infer and monitor the network link characteristics from indirect
   end-to-end (aggregate) measurements. The problem falls in the area of \emph{network tomography}, which is useful for network traffic
   engineering \cite{zrwq} and fault diagnosis \cite{BGP}\cite{kleinberg}\cite{multihopwireless}\cite{Opticalprobabilistic}. 
  Because of its importance in practice, network tomography has seen a surge in excellent research activities performed from different angles, for example,  \cite{duffiedtomography}\cite{Bindelalgebraic}\cite{overviewtomography}\cite{Duffieldbinarynetworks} \cite{grouptestingongraphs}\cite{kleinberg}\cite{identifyingcode}\cite{endtoenddtawireless}\cite{binary}\cite{Bindelunbiased}. In this paper, we propose to study the basic network tomography problem from the angle of ``compressive sensing'', which aims to recover parsimonious signals from underdetermined or incomplete observations.

  Compressive sensing is a new paradigm in signal processing theory, which challenges to sample and recover parsimonious signals efficiently. It has seen quick acceptance in such applications as seismology, error correction and medical imaging since the breakthrough works \cite{CT1}\cite{BFP}\cite{C}\cite{DT}, although its role in networking is still limited \cite{Rabbat2}\cite{Rabbat3}\cite{Rabbat1}\cite{zrwq}. Its basic idea is that if  an object being measured is well-approximated by a lower dimensional object (e.g., sparse vector, low-rank matrix, etc.) in an appropriate space, one can exploit this property to achieve perfect recovery of the object. Compressive sensing \cite{CT1}\cite{C}\cite{DT} characterizes this phenomenon for sparse signal vectors, and presents efficient signal recovery schemes, from a small number of measurements. Recent works have started to extend this framework to the efficient inferring of low-rank matrices \cite{BFP}.

In this paper, we propose a compressive sensing approach for network (graph) tomography by exploiting the sparse signal structures therein. For example, it is very common that only a small fraction of network links are experiencing congestion or large packet loss rates. Compressive sensing appears to be the right tools to infer those sparse characteristics. However, many existing results of compressive sensing critically rely on assumptions that do not hold for network applications. For example, in network tomography, a measurement matrix is in a more restrictive class, taking only nonnegative integers while random Gaussian measurement matrices are commonly used in current compressive sensing literature. More importantly, as we will see, measurements are restricted by network topology and network operation constraints which are again absent in existing compressive sensing research. Overall, compressive sensing for network tomography, compared with other compressive sensing problems, is quite different and interesting in its own right because of its close connection to graphs. It is therefore not clear whether we have theoretical guarantees for recovering individual link characteristics using underdetermined observations under graph \emph{topology} constraints and if so, how to do it. This paper answers these two fundamental questions.

More concretely, bridging the gap between compressive sensing and graph theory, we study compressive sensing over graphs. The signal vectors to be recovered are sparse vectors representing the link parameters of a graph. We are allowed to take measurements following paths (walks) over the graph. We have the following two main results: for a sufficiently connected graph with $n$ nodes, even though under the \emph{graph path constraints},
\begin{itemize}
\item $O(k \log(n))$ path measurements are sufficient for identifying any $k$-sparse link vector (for example, identifying $k$ congested links)
\item $\ell_1$ minimization has a theoretical guarantee of recovering any $k$-sparse link vector with $O(k \log(n))$ path measurements.
\end{itemize}

 The paper is organized as follows. In Section \ref{subsec:delayexample}, we give the problem formulation, explain the special properties of compressive sensing over graphs, and compare it with graph constrained group testing problems. In Section \ref{sec:results}, we show that $O(k \log{(n)})$ path measurements are sufficient for compressive sensing over graphs.  In Section \ref{sec:algorithm}, we show that $\ell_1$ minimization can provably guarantee the performance of compressive sensing over graphs.  Section \ref{sec:numerical} presents numerical examples to confirm our predictions.  We conclude in Section \ref{sec:conclusion}.


\section{Problem Formulation and Related Works}
\label{subsec:delayexample}

We consider a network, represented by an undirected graph $G=(V,E)$, where $V$ is the vertex (or node) set with cardinality $|V|=n$, and $E$ is the edge (or link) set with cardinality $|E|$. Communications between vertices can only occur over these edges. Over each undirected edge between two vertices, communications can occur in both directions. \footnote{This undirected graph model has been used for communications networks such as optical networks  \cite{grouptestingongraphs}\cite{Opticalprobabilistic}. And Our work can also be extended to directed graph models. We also allow paths to visit an edge multiple times.} We also assume that each communication route must be a connected path over this undirected graph.

Suppose that we have probes along $m$ source-destination pairs over a network ($|E|>m$, otherwise the problem is not interesting). We are interested in identifying certain links from the probe measurements. For example, the congested links with large delays or high packet loss rates. We note that the delay over each source-destination pair is a sum of the delays over each edge on the route between this source-destination pair, giving a natural linear mixing of the link delays on the route. Abstractly, let $\x$ be an $|E| \times 1$ non-negative vector whose $j$-th element represents the delay (or $-\log(1-P_{j})$, where $P_{j}$ is the packet loss rate over link $j$) over  edge $j$ and let $\y$ be an $m \times 1$ dimensional vector whose $i$-th element is the end-to-end delay (or $-\log(1-P)$, where $P$ is the packet loss rate for the whole path) measurement for the $i$-th source-destination pair. Then
\begin{equation}
\y=A\x,
\end{equation}
where $A$ is an $m \times |E|$ matrix, whose element in the $i$-th row and $j$-th column is `1' if the $i$-th source-destination pair routes through the $j$-th link and `0' otherwise. For example, for a network with $|E|=6$ links and $m=4$ paths in Figure \ref{fig:example}, the measurement matrix $A$ is:

\begin{figure}
  \centering
  \includegraphics[width=0.2\textwidth]{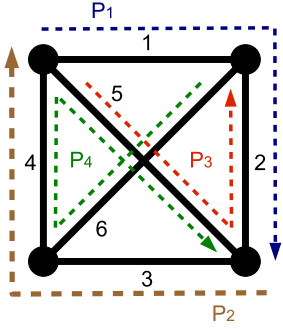}
   \caption{A Network Example}
  \label{fig:example}
\end{figure}

\begin{equation}
\label{eq:Aexample}
    A=\left( \begin{array}{cccccc}1&1&0&0&0&0 \\
    0&0&1&1&0&0 \\
    0&1&0&0&1&0 \\
    0&0&0&1&1&1
        \end{array}\right).
    \end{equation}

The question now is whether we can estimate the link vector $\x$, using the path measurement $\y$. Although $|E|>m$ means we only have an underdetermined system, it is still possible if we know $\x$ is a sparse vector, which in practice can often be a reasonable assumption. For example, there are only a small fraction of links that are congested, i.e., the link delays are considerably larger than the delays over other links. In other words, the vector $\x$ representing the delays over links is a spiky (or approximately sparse) vector. This provides the foundation to link our network tomography problems to compressive sensing. There are however important differences between network tomography problems and general compressive sensing formulation:

\begin{itemize}

\item Because of making measurements over communications paths, the element $A_{i,j}$ from $A$ is either $0$, when the measurement path $i$ does not go through link $j$, or an integer $b$, when the measurement path $i$ goes through link $j$ for $b>0$ times. Generally, the number $b$ is `1', which often makes the matrix a `0' and `1' matrix.

\item More importantly, besides being a `0-natural number' matrix, $A$ also has to satisfy the path constraints over the graph. Namely, all the nonzero elements in row $i$ of $A$ must correspond to a connected path. Even for a complete graph in Figure \ref{fig:example}, a row from $A$ can not take the form $(0,0,0,0,1,1)$. This is because no path can \emph{only} transverse link $5$ and link $6$.


\item In many cases, the sparse link vectors we are interested in are nonnegative vectors. For instance, the delay vectors and the inverse logarithm of the packet loss rate vector.

\end{itemize}

Finally, we want to compare our study with a closely related topic, graph-constrained group testing \cite{monitoringcycle,grouptesting, groupbook, grouptestingongraphs,binary}. Compressive sensing over graphs involves $\y$ which can take values over real numbers, instead of `true-or-false' binary values for the group testing problems. The measurement result $\y$ is the additive linear mixing of the vector $\x$ over real numbers, in contrary to the logic OR operation for group testing problems. Consider a simple example, if the delay vector $\x$ for the network in Figure \ref{fig:example} is $(2,3,0,0,0,0)^{T}$, then in compressive sensing, $\y=(5,0,3,0)^{T}$; while in group testing, $\y=(\text{Y}, \text{N},\text{Y}, \text{N})$, where $\text{Y}$ and $\text{N}$ represent ``congested" and ``not congested" respectively.  From compressive sensing, by a simple checking, we know $\x=(2,3,0,0,0,0)^{T}$ is the only sparsest solution  that satisfies $\y=A\x$; however, group testing will decide that $\x=(\text{N},\text{Y},\text{N},\text{N},\text{N},\text{N})^{T}$. But in fact, there is no $1$-sparse $\x$ that can generate such a $\y=(5,0,3,0)^{T}$. This hints that compressive sensing can do better than group testing in terms of needed measurements which will be further quantified in table I.

\section{When is Compressive Sensing over Graphs Possible?}
\label{sec:results}

In this section, we focus on the question that how many path observations will suffice to recover any $k$ network edge failure. First, in an order of more and more demanding requirements, we give three conditions on the measurement matrix $A$ to guarantee recovering $k$-sparse link vectors (Theorems \ref{thm:nonnegative}, \ref{thm:rankcondition} and \ref{thm:special_condition}). Then we show that a measurement matrix generated from random walks will be able to recover any $k$-sparse vector using only $O(k\log(n))$ measurements.

\subsection{Success Conditions for Compressive Sensing}
\begin{theorem}
Let $\y=A\x$. Then if $\x$ is a nonnegative signal vector with no more than $k$ nonzero elements, with
 \begin{equation*}
 k < \min_{\w \in \mathcal{N}(A), \w \neq {0}}\max\{k_{-,\w},k_{+,\w}\},
 \end{equation*}
where $\mathcal{N}(A)$ is the null space of $A$, $k_{-,\w}$ and $k_{+,\w}$ are the number of negative and positive nonzero elements in the vector $\w$, then any such nonnegative signal vector is the unique sparsest nonnegative vector satisfying $\y=A\x$. Conversely, if
\begin{equation*}
 k \geq  \min_{\w \in \mathcal{N}(A), \w \neq 0}\max\{k_{-,\w},k_{+,\w}\},
 \end{equation*}
then there exists a nonnegative $k$-sparse vector $\x$ such that it is not the unique sparsest nonnegative vector satisfying $\y=A\x$.
\label{thm:nonnegative}
\end{theorem}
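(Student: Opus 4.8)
\emph{Proof plan.} The plan is to handle both directions with the standard null-space device for sparse-recovery uniqueness: take the difference of two candidate solutions and exploit that the difference of two nonnegative vectors is sign-constrained on each of their supports. For the forward (sufficiency) direction, suppose $\x$ is nonnegative with $\|\x\|_0 \le k$, and let $\x'$ be any nonnegative vector with $A\x' = \y = A\x$ and $\|\x'\|_0 \le k$. Put $\w = \x - \x' \in \mathcal{N}(A)$. The key point is that a coordinate $i$ with $\w_i > 0$ forces $\x_i > \x'_i \ge 0$, hence $i \in \mathrm{supp}(\x)$, whereas $\w_i < 0$ forces $\x'_i > \x_i \ge 0$, hence $i \in \mathrm{supp}(\x')$; therefore $k_{+,\w} \le \|\x\|_0 \le k$ and $k_{-,\w} \le \|\x'\|_0 \le k$, so $\max\{k_{-,\w},k_{+,\w}\} \le k$. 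If $\w \ne 0$ this contradicts the hypothesis $k < \min_{\w \in \mathcal{N}(A),\,\w \ne 0}\max\{k_{-,\w},k_{+,\w}\}$, so $\w = 0$ and $\x' = \x$. Since every nonnegative solution with at most $k$ nonzeros therefore equals $\x$, and any \emph{sparsest} nonnegative solution has at most $\|\x\|_0 \le k$ nonzeros, $\x$ is the unique sparsest nonnegative vector satisfying $\y = A\x$.

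For the converse (necessity) direction, fix a nonzero $\w^\ast \in \mathcal{N}(A)$ attaining the minimum and write $\w^\ast = (\w^\ast)_+ - (\w^\ast)_-$, the difference of its entrywise positive and negative parts, both nonnegative. Let $\x$ be whichever of $(\w^\ast)_+,\,(\w^\ast)_-$ has the larger support and $\x'$ the other, so $\|\x\|_0 = \max\{k_{-,\w^\ast},k_{+,\w^\ast}\} \le k$ and $\|\x'\|_0 \le \|\x\|_0 \le k$. Since $A(\w^\ast)_+ = A(\w^\ast)_-$, both $\x$ and $\x'$ are nonnegative $k$-sparse vectors satisfying $\y = A\x$, and $\x \ne \x'$ because $\w^\ast \ne 0$. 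Hence $\x$ fails to be the unique sparsest such vector: if $\|\x'\|_0 < \|\x\|_0$ then $\x$ is not sparsest at all (in the extreme case $k_{-,\w^\ast}=0$ one may even take $\x'=0$), while if $\|\x'\|_0 = \|\x\|_0$ then $\x'$ is a second solution of the same sparsity.

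The argument is short; the only spots that need care are the bookkeeping in the converse when one of $k_{+,\w^\ast}$, $k_{-,\w^\ast}$ vanishes (handled by the ``strictly sparser'' case above) and the small logical step that uniqueness among \emph{all} nonnegative solutions of support size at most $k$ already yields uniqueness of the sparsest one. I would also stress that Theorem~\ref{thm:nonnegative} is a statement purely about $\mathcal{N}(A)$ and does not itself invoke the graph or path structure of $A$; that structure enters only afterwards, when one must exhibit concrete measurement matrices (for instance those built from random walks) for which $\min_{\w \in \mathcal{N}(A),\,\w\ne 0}\max\{k_{-,\w},k_{+,\w}\}$ exceeds $k$.
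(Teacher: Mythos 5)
Your proof is correct and takes essentially the same route as the paper's: the forward direction is the standard null-space argument (you bound $k_{+,\w}$ and $k_{-,\w}$ by the supports of the two candidate solutions, while the paper equivalently shows that $\x+\w$ either acquires a negative entry or exceeds sparsity $k$), and the converse builds the counterexample from the positive and negative parts of a null vector attaining the minimum, just as the paper does. Your handling of the degenerate case where one of $k_{+,\w^\ast},k_{-,\w^\ast}$ vanishes is a small but welcome clarification over the paper's version.
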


\begin{proof} We first prove the forward direction. Indeed, any vector $\tilde{\x}$ satisfying $\y=A\tilde{\x}$ must be of the form $\tilde{\x}=\x+\w$ with $\w$ from the null space of the matrix $A$. If $k <k_{-,\w}$, then the $k$-sparse nonnegative vector $\x$ plus the vector $\w$ will have at least one negative element, which can not be a nonnegative solution to $\y=A\x$. If instead $k <k_{+,\w}$, then the $k$-sparse nonnegative vector $\x$ plus the vector $\w$ will have at least $k_{+,\w}$ nonzero elements, which must have more than $k$ nonzero elements.

Now we only need to prove that we can always find a $k$-sparse signal $\x$ with  $ k \geq \min_{\w \in \mathcal{N}(A), \w \neq 0}\max\{k_{-,\w},k_{+,\w}\}$ such that $\x$ is not the unique sparsest solution satisfying $\y=A\x$. We let $\w \in \mathcal{N}(A)$ denote the nonzero vector minimizing $\max\{k_{-,\w},k_{+,\w}\}$.

In fact, if we take a vector $\x$ supported on the set $K$, with $|K|=k=\max\{k_{-,\w},k_{+,\w}\}$, $K_{-,\w} \subseteq K$ and $K \subseteq K_{-,\w} \bigcup K_{+,\w}$, where $K_{-,\w}$ is the index set for the negative elements of $\w$ and $K_{+,\w}$ is the index set for the positive elements of $\w$.

We let $\x_{K}=|\w_{K}|$ (taking elementwise absolute value). Then obviously, $\x+\w$ will be a $k_{+,\w}$-sparse nonzero vector, and has no more than $k$ nonzero elements.

\end{proof}

For comparison, we have a more stricter, but easier to use condition for recovering an arbitrary (not necessarily nonnegative) $k$-sparse vector $\x$.

\begin{theorem}
Let $\y=A\x$.  If $\x$ is a signal vector with no more than $k$ nonzero elements, where
 \begin{equation*}
 k < \min_{\w \in \mathcal{N}(A), \w \neq 0}\frac{\|\w\|_{0}}{2},
 \end{equation*}
where $\|\w\|_{0}$ is the number of nonzero elements in the vector $\w$, then $\x$ is the unique sparsest vector satisfying $\y=A\x$. Conversely, if
 \begin{equation*}
 k \geq  \min_{\w \in \mathcal{N}(A), \w \neq 0}\frac{\|\w\|_{0}}{2},
 \end{equation*}
then there exists a $k$-sparse vector $\x$ such that it is not the unique sparsest vector satisfying $\y=A\x$.
\label{thm:rankcondition}
\end{theorem}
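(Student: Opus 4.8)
The plan is to mimic the null-space perturbation argument already used for Theorem~\ref{thm:nonnegative}, but since $\x$ is now allowed to have arbitrary signs we can drop all nonnegativity considerations and argue purely by counting nonzero coordinates. Write $d := \min_{\w \in \mathcal{N}(A),\, \w \neq 0}\|\w\|_{0}$, so the two hypotheses become $k < d/2$ and $k \geq d/2$ respectively. The only facts used are that any two solutions of $\y = A\x$ differ by an element of $\mathcal{N}(A)$, and the elementary bound $\|\x+\w\|_{0} \geq \|\w\|_{0} - \|\x\|_{0}$.

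For the forward direction, let $\x$ be any vector with $\|\x\|_{0} \leq k$ and let $\tilde{\x} \neq \x$ be any other vector with $A\tilde{\x} = \y = A\x$. Then $\w := \tilde{\x} - \x$ is a nonzero null-space vector, hence $\|\w\|_{0} \geq d$. On the index set $\operatorname{supp}(\w)\setminus\operatorname{supp}(\x)$, whose size is at least $\|\w\|_{0} - \|\x\|_{0} \geq d - k$, the vector $\tilde{\x}$ agrees with $\w$ and is therefore nonzero; consequently $\|\tilde{\x}\|_{0} \geq d - k > 2k - k = k \geq \|\x\|_{0}$. Thus every competing solution is strictly less sparse than $\x$, which is therefore the unique sparsest vector satisfying $\y = A\x$.

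For the converse, assume $k \geq d/2$ and fix $\w^{*} \in \mathcal{N}(A)$ with $\|\w^{*}\|_{0} = d$. Choose a subset $T_{1} \subseteq \operatorname{supp}(\w^{*})$ of size $\lceil d/2 \rceil$; from $d \leq 2k$, together with a one-line parity check when $d$ is odd, one gets $\lceil d/2 \rceil \leq k$. Define $\x$ to coincide with $-\w^{*}$ on $T_{1}$ and to vanish elsewhere, so that $\|\x\|_{0} = \lceil d/2\rceil \leq k$, i.e.\ $\x$ is $k$-sparse. Then $\tilde{\x} := \x + \w^{*}$ is supported on $\operatorname{supp}(\w^{*})\setminus T_{1}$, satisfies $A\tilde{\x} = A\x = \y$, and has $\|\tilde{\x}\|_{0} = \lfloor d/2 \rfloor \leq \|\x\|_{0}$. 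Since $\tilde{\x} = \x + \w^{*} \neq \x$ and $\tilde{\x}$ has at most as many nonzeros as $\x$, the vector $\x$ is not the unique sparsest solution of $\y = A\x$, proving the claim.

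This is entirely elementary and I do not expect a genuine obstacle; the proof is essentially the classical spark characterization. The single point that needs care is in the converse: one must partition $\operatorname{supp}(\w^{*})$ into the right two pieces and place $\x$ on the \emph{larger} half $T_{1}$, so that the constructed competitor $\tilde{\x}$ is simultaneously a valid solution and has no more nonzeros than $\x$ (strictly fewer when $d$ is odd, equal when $d$ is even) — the parity of $d$ being the only bookkeeping subtlety.
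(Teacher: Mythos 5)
Your proof is correct and takes essentially the approach the paper intends: the paper's own proof of this theorem is just the one-line remark ``following the same line of proof in Theorem~\ref{thm:nonnegative},'' i.e.\ the same null-space perturbation argument you carry out. You supply the details the paper omits (the bound $\|\x+\w\|_{0}\geq\|\w\|_{0}-\|\x\|_{0}$ for the forward direction and the split of $\operatorname{supp}(\w^{*})$ into halves for the converse), and both steps, including the parity bookkeeping, check out.
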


\begin{proof}
Following the same line of proof in Theorem \ref{thm:nonnegative}.
\end{proof}

Based on the previous theorems, we can now give a stricter sufficient condition for recovering $k$-sparse signal.

\begin{theorem}
\label{thm:special_condition}
Suppose that for every no more than $h$ columns, indexed by the set $H \subseteq \{1,2, ..., |E|\}$, of the $m \times |E|$ measurement matrix $A$, the corresponding $m \times h$ submatrix $A_{H}$ (consisting of these $h$ columns of $A$) has at least one row, say row $i$, such that there is a single nonzero element in that row. Then any $k$-sparse signal vector $\x$, with $k < \frac{h+1}{2}$, is the unique sparsest solution $\x$ to $\y=A\x$.

\end{theorem}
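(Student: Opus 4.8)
The plan is to reduce the statement to the null-space condition established in Theorem~\ref{thm:rankcondition}. That theorem guarantees that any $k$-sparse $\x$ is the unique sparsest solution of $\y=A\x$ as soon as $k < \frac{1}{2}\min_{\w \in \mathcal{N}(A),\,\w \neq 0}\|\w\|_{0}$, so it suffices to show that under the stated hypothesis every nonzero vector $\w$ in the null space of $A$ satisfies $\|\w\|_{0} \geq h+1$. Indeed, once this is done we get $\frac{1}{2}\min_{\w \neq 0}\|\w\|_{0} \geq \frac{h+1}{2} > k$, and Theorem~\ref{thm:rankcondition} immediately yields the claim.

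To prove the bound $\|\w\|_{0}\geq h+1$, I would argue by contradiction. Suppose there is a nonzero $\w \in \mathcal{N}(A)$ with $\|\w\|_{0}\leq h$, and set $H = \mathrm{supp}(\w) \subseteq \{1,\dots,|E|\}$, so that $|H|\leq h$. Applying the hypothesis to this particular set $H$ of columns, the submatrix $A_{H}$ has some row $i$ with exactly one nonzero entry, occurring in a column $j_{0}\in H$; call this entry $A_{i,j_{0}}$, which is a positive integer and in particular nonzero. Now compute the $i$-th coordinate of $A\w$: since $\w$ vanishes outside $H$, we have $(A\w)_{i} = \sum_{j\in H} A_{i,j}\w_{j} = A_{i,j_{0}}\w_{j_{0}}$, because row $i$ of $A_{H}$ has no other nonzero entry. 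Since $j_{0}\in H = \mathrm{supp}(\w)$, we have $\w_{j_{0}}\neq 0$, hence $(A\w)_{i} = A_{i,j_{0}}\w_{j_{0}}\neq 0$, contradicting $\w\in\mathcal{N}(A)$. Therefore no such $\w$ exists, every nonzero null vector of $A$ has at least $h+1$ nonzero coordinates, and the proof is complete.

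There is essentially no serious obstacle here; the one point requiring a little care is to take $H$ to be \emph{exactly} the support of $\w$ (rather than padding it out to a set of size $h$), so that the ``single nonzero entry in some row of $A_{H}$'' automatically pairs a nonzero matrix entry with a nonzero component of $\w$. It is also worth noting explicitly that $A_{i,j_{0}}\neq 0$ because $A$ has nonnegative integer entries, so that the product cannot vanish. One could additionally remark that this argument in fact characterizes the hypothesis as a lower bound of $h+1$ on the spark-like quantity $\min_{\w\neq 0}\|\w\|_{0}$, so that within this line of reasoning the threshold $k<\frac{h+1}{2}$ is the best one can hope for.
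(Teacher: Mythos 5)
Your proposal is correct and follows essentially the same route as the paper: reduce to Theorem~\ref{thm:rankcondition} by showing every nonzero null-space vector has at least $h+1$ nonzero entries, taking $H$ to be the support of a hypothetical sparser null vector and deriving a contradiction from the single-nonzero-entry row of $A_{H}$. Your write-up merely spells out the coordinate computation $(A\w)_{i}=A_{i,j_{0}}\w_{j_{0}}\neq 0$ that the paper leaves implicit.
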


\begin{proof}
From Theorem \ref{thm:rankcondition}, we only need to show that in the null space of $A$, every nonzero vector will have at least $(h+1)$ nonzero elements. In fact, suppose that there exists a nonzero vector $\w \neq 0$ from the null space of $A$, which has no more than $h$ nonzero elements, and suppose that its support set is $H$. However, since there exists one row in $A_{H}$ with a single nonzero element, $A_{H}\w_{H}$ must be nonzero, which contradicts the fact that $\w$ is from the null space of $A$. So each nonzero vector in the null space of $A$ has at least $(h+1)$ nonzero elements. From Theorem \ref{thm:rankcondition}, every $k$-sparse vector $\x$, with $k < \frac{h+1}{2}$, will be the unique sparsest solution to $\y=A\x$.

\end{proof}

\subsection{How Many Measurement Paths are Needed?}
Now we want to show that $O(k\log(n))$ measurements are enough for recovering any $k$-sparse link vector for a sufficiently connected graph with $n$ nodes.

\subsubsection{Graph Assumptions}
Before we proceed, following the works on graph-constrained group testing \cite{grouptestingongraphs,grouptesting}, we introduce the following assumptions on the graphs.

  The undirected graph $G=(V,E)$ is called a $(D,c)$ uniform graph if for some constant $c$, the degree of each vertex $v \in V$ is between $D$ and $cD$. Suppose that a standard random walk over the graph has a stationary distribution $\mu$ over the nodes. The $\delta$-mixing time of $G$ is defined as the smallest $t'$ such that a random walk of length $t'$ starting at any vertex in $G$ ends up having a distribution $\mu'$ such that $\|\mu-\mu'\|_{\infty} \leq \delta$. We define $T(n)$ as the $\delta$-mixing time of $G$ for $\delta=\frac{1}{(2cn)^2}$.

\subsubsection{$O(k \log(n))$ measurements are sufficient}

In compressive sensing, we adopt an $m \times |E|$ measurement matrix generated by $m$ independent random walks . For each random walk, we uniformly randomly pick a starting vertex from $V$ and then perform a standard random walk over the graph. The length of the random walk is denoted by $t$.
From \cite{grouptesting}, we have the following theorem,
\begin{theorem}
\cite{grouptesting}
There is a degree $D_{0}=O(c^2kT^2(n))$ and $t=O(\frac{nD}{c^3kT(n)})$ such that whenever $D \geq D_{0}$, by setting the path lengths $t=O(\frac{nD}{c^3 kT(n)})$ the following holds. Let $B$ be a set of at most $(k-1)$ edges in the graph $G$, and let $e$ be an edge not belonging to the set $B$.  Then
\begin{equation*}
\pi_{e,B}=\Omega(\frac{1}{c^4 k T^2(n)}),
\end{equation*}
where $\pi_{e,B}$ is the probability that the random walk passes through link $e$, but misses all the edges from the set $B$.
\end{theorem}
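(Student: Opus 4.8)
The theorem is quoted from \cite{grouptesting}; the route I would take to prove it is a second-moment argument on the number of \emph{useful} traversals of $e$. Let $N$ be the number of steps $s\in\{1,\dots,t\}$ at which the length-$t$ walk traverses $e$, and let $N^{*}\le N$ count those steps at which the walk in addition traverses no edge of $B$ at any step of the whole walk. Then $\pi_{e,B}=\Pr[N^{*}\ge 1]\ge (\stexp[N^{*}])^{2}/\stexp[(N^{*})^{2}]$ by Cauchy--Schwarz, so it suffices to prove $\stexp[N^{*}]=\Omega(1/(c^{4}kT(n)))$ and $\stexp[(N^{*})^{2}]=O(\stexp[N^{*}])$; this route in fact yields $\Omega(1/(c^{4}kT(n)))$, which already implies the claimed $\Omega(1/(c^{4}kT^{2}(n)))$ since $T(n)\ge 1$. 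I will use that the standard walk is reversible with stationary distribution $\mu(v)=\deg(v)/(2|E|)$, so $|E|\in[nD/2,\,cnD/2]$, $\mu_{\max}\le c/n$, and at a step whose position is $\mu$-distributed a fixed edge is traversed with probability exactly $1/|E|$.

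\emph{First moment.} By the definition of $T(n)$ with $\delta=(2cn)^{-2}$, the walk's marginal is within $\ell_{\infty}$-distance $\delta$ of $\mu$ after $T(n)$ steps, while a short reversibility estimate caps the marginal by $c/n$ pointwise during the first $T(n)$ steps; since $t\gg T(n)$ this gives $\stexp[N]=(1\pm o(1))\,t/|E|=\Omega(1/(c^{4}kT(n)))$ once $t=\Theta(nD/(c^{3}kT(n)))$. To pass to $N^{*}$, condition on a traversal of $e$ at a step $s$: by the Markov property the sub-walks before and after $s$ are conditionally independent given the endpoint, so the probability of missing $B$ factors, and each factor is at least $1$ minus the probability that a walk of length $\le t$ from an arbitrary vertex ever hits $B$. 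Splitting that probability at time $T(n)$ --- the first $T(n)$ steps give $\le 2(k-1)T(n)/D$, the rest give $\le 2(k-1)t/|E|$ because the marginal is then $\mu$-close --- and using $D\ge D_{0}=\Omega(c^{2}kT^{2}(n))$ together with the size of $t$, both pieces can be made $\le 1/(4c^{2})$ by fixing absolute constants, so each factor is $\ge 3/4$; the ``before'' factor picks up an extra $c^{2}$ because we condition on the endpoint of a walk started from the uniform rather than the stationary law, which the estimate absorbs. Hence $\stexp[N^{*}]\ge\tfrac12\stexp[N]=\Omega(1/(c^{4}kT(n)))$.

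\emph{Second moment.} Since $N^{*}\le N$, bound $\stexp[N^{2}]=\stexp[N]+2\sum_{s<s'}\Pr[e\text{ traversed at }s\text{ and }s']$. Each cross term is $\Pr[e\text{ traversed at }s]$ times the conditional probability of later traversing $e$ again, which from an endpoint of $e$ is at most $\tfrac1D$ times the expected number of visits to that endpoint over the remaining $\le t$ steps; the mixing hypothesis caps that expectation by $O(T(n)+t\mu_{\max})$, so the whole conditional sum is $O(1/(c^{2}kT(n)))=O(1)$ (the factor $D$ cancels between $t$ and $\mu_{\max}$) and $\stexp[N^{2}]=O(\stexp[N])$. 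Feeding the two moment bounds into Cauchy--Schwarz gives $\pi_{e,B}=\Omega(\stexp[N])=\Omega(1/(c^{4}kT^{2}(n)))$, and inspecting the error terms shows that $D_{0}=O(c^{2}kT^{2}(n))$ and $t=O(nD/(c^{3}kT(n)))$ are exactly the thresholds that make every one of them lower order.

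The hard part is making the two moment estimates coexist: I must simultaneously keep conditioning on ``$B$ missed'' from depleting the expected number of $e$-traversals (the Markov-property split plus smallness of the hitting-$B$ probabilities) and keep those traversals from clustering in a short time window (the return-probability bound), and both hold only because the mixing-time assumption and the degree lower bound $D\ge D_{0}$ force the ``return to $e$'' and ``hit $B$'' probabilities to be lower order. A minor nuisance is the uniform (rather than stationary) start, which makes the ``before $s$'' conditioning cost a factor $c^{2}$. The proof in \cite{grouptesting} packages these same ingredients via a decomposition of the walk into $\Theta(t/T(n))$ near-independent length-$T(n)$ blocks.
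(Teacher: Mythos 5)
The paper does not actually prove this theorem: it is imported verbatim from \cite{grouptesting}, so there is no internal proof to compare your argument against. The closest relatives inside the paper are Lemma \ref{lem:setSprobability} and Theorem \ref{thm:conditional} in Section \ref{sec:algorithm}, which bound the same kind of visit/miss probabilities by decomposing the walk into $T(n)$ interleaved subchains sampled at period $T(n)$ and applying union bounds over blocks --- a purely first-moment strategy, and essentially the one used in \cite{grouptesting} itself. Your Paley--Zygmund/second-moment route is genuinely different and, at the level of a sketch, sound: the factorization of the miss-$B$ probability across a conditioned traversal time is legitimate by the Markov property; the $c^{2}$ reversal penalty for the uniform (rather than stationary) start is a real issue and you absorb it correctly; and the second-moment bound $\mathbb{E}[N^{2}]=O(\mathbb{E}[N])$ goes through because both the short-time return term $T(n)/D$ and the long-time term $tc/(nD)$ are $O(1/(c^{2}kT(n)))$ under $D\ge D_{0}$ and the stated choice of $t$. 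What your approach buys is a strictly stronger conclusion, $\Omega(1/(c^{4}kT(n)))$ instead of $\Omega(1/(c^{4}kT^{2}(n)))$: the block decomposition keeps only one traversal opportunity per period $T(n)$ and so pays an extra factor of $T(n)$, whereas counting all $t$ steps and controlling clustering via the second moment does not. What it costs is precisely that clustering control, which the union-bound argument never needs. Two details to write out if you flesh this in: (i) $\ell_{\infty}$-closeness to $\mu$ at \emph{all} times beyond $T(n)$, not just at time exactly $T(n)$, needs the monotonicity of the $\ell_{\infty}$ distance for the reversible chain (the paper's Lemma \ref{lem:p_revisiting} makes the same silent assumption); and (ii) your argument appears to need only $D_{0}=\Omega(c^{2}kT(n))$, which is consistent with --- indeed weaker than --- the stated $D_{0}=O(c^{2}kT^{2}(n))$, so no contradiction arises there.
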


Now we take an arbitrary set of edges $E'$ with cardinality $|E'|=k$. Let us take $m$ independent measurements satisfying the graph path constraints. Then the probability that there does not exist any measurement walk (each walk corresponds to a row of the measurement matrix $A$) with a single nonzero element in the columns corresponding to the edges from $E'$, can be expressed by
\begin{equation*}
   P=(1-\pi_{E'})^{m},
\end{equation*}
where $\pi_{E'}$ is the probability that a random walk visits one and only one element from the set $E'$. In fact, $\pi_{E'}=\Omega(\frac{1}{c^4 k T^2(n)})\times k$ since the events of having a single nonzero element can be divided into $k$ disjoint events, each of which is the event that the single nonzero element appears in one of the $k$ possible columns of $E'$.

Since there are $\binom{|E|}{k}$ ways of choosing the $k$ edges, the probability that there exists one edge set $E'$ of $|E'|=k$ without any single-nonzero-element row, is

\begin{eqnarray}
P_{k,k}&\leq&\binom{|E|}{k} (1-\pi_{E'})^{m}\\
       &\leq& \binom{n^2}{k} (1-\Omega(\frac{k}{c^4 k T^2(n)}))^m\\
       &\leq& e^{k(1+\log(\frac{n^2}{k}))+m\log(1-\Omega(\frac{1}{c^4T^2(n)}))}
\end{eqnarray}

So if
\begin{equation*}
e^{k(1+\log(\frac{n^2}{k}))+m\log(1-\Omega(\frac{1}{c^4T^2(n)}))}<1,
\end{equation*}
namely
\begin{equation*}
m > -\frac{{k(1+\log(\frac{n^2}{k}))}}{\log(1-\Omega(\frac{1}{c^4T^2(n)}))},
\end{equation*}
the probability $P_{k,k}$ will be smaller than $1$.

Now let us look at a set $E''$ with cardinality $|E''|=k_1$ smaller than $k$.
We notice that $\pi_{e,B}=\Omega(\frac{1}{c^4 k T^2(n)})$ is true
for any edge $e$ and any set $B$ of cardinality no bigger than $k$.  So the probability $\pi_{E''}$ that a random walk visits edge $e$ (and only visits that edge $e$) from the set $E''$ is $\pi_{e, E''\setminus e}= \Omega(\frac{1}{c^4 k T^2(n)})$.

Again we take $m$ independent random walk  measurements satisfying the graph constraints. Then the probability that there does not exist any measurement having one and only one nonzero element in the columns corresponding to the edge set $E''$ is given by

\begin{equation*}
   P=(1-\pi_{E''})^{m},
\end{equation*}
where $\pi_{E''} =\Omega(\frac{k_1}{c^4 k T^2(n)})$ since the events of having a unique nonzero element over $k_1$ different columns are disjoint events.

Since there are $\binom{|E|}{k_1}$ ways of choosing the $k_1$ edges, the probability that there exists one edge set $E'$ with $|E'|=k_1$ without any desired single-nonzero-element row is

\begin{eqnarray}
P_{k_1,k}&\leq&\binom{|E|}{k_1} (1-\pi_{E''})^{m}\\
       &\leq& \binom{n^2}{k_1} (1-\Omega(\frac{k_1}{c^4 k T^2(n)}))^m\\
       &\leq& e^{k_1(1+\log(\frac{n^2}{k_1}))+m\log(1-\Omega(\frac{k_1}{c^4 k T^2(n)}))}
\end{eqnarray}

So if
\begin{equation*}
k_1(1+\log(\frac{n^2}{k_1}))+m\log(1-\Omega(\frac{k_1}{c^4 k T^2(n)}))<0,
\end{equation*}

namely
\begin{equation*}
m > -\frac{{k_1(1+\log(\frac{n^2}{k_1}))}}{\log(1-\Omega(\frac{k_1}{c^4kT^2(n)}))}.
\end{equation*}

So as long as $m > \max_{1 \leq k_1 \leq k}{-\frac{{k_1(1+\log(\frac{n^2}{k_1}))}}{\log(1-\Omega(\frac{k_1}{c^4kT^2(n)}))}}$, with probability $1-o(1)$, the measurement matrix $A$ guarantees recovering up to $\frac{k}{2}$-sparse link vectors (from Theorem \ref{thm:special_condition}). In fact, $m=O(c^4T^2(n)k\log(n))$ measurement paths suffice.

The following table provides a summary of results for number of measurements needed in graph constrained problems or general problems without graph constraints.

\begin{table}
\label{table:comparison}
\centering
\begin{tabular}{|c|c|c|}
  \hline
   $m$& Compressive sensing & Group Testing \\
   \hline
  Graph constrained & $O(k\log(n))$(this paper) & $O(k^2\log(\frac{n}{k})) \cite{grouptesting}$ \\
  \hline
  General & $O(k \log(\frac{n}{k}))\cite{C}$ & $O(k^2\log(\frac{n}{k}))\cite{groupbook}$ \\
  \hline
\end{tabular}
\caption{Number of measurements needed in different scenarios}
\end{table}

\section{$\ell_1$ Minimization Decoding}
\label{sec:algorithm}
$\ell_1$ minimization has been a popular efficient decoding method for inferring $\x$ from compressed measurements $\y=A\x$ \cite{C,DT}. $\ell_1$ minimization solves for $\min{\|\x\|_1}$ subject to the constraint $\y=A\x$. However, it is not clear how one can \emph{efficiently} infer these sparse vectors over \emph{graphs}. In this section, we show that when the number of measurement paths is $m=O(k \log(n))$,  $\ell_1$ minimization can recover any $k$-sparse link vector efficiently. We will consider the matrix $A$ generated by \emph{regularized} random walks with ``\emph{good starts}''. Our proof strategy is to show that under the same graph assumptions as in last section, $A$ corresponds to a bipartite \emph{expander} graph with high probability. Then we use the expansion property to show the null space property of $A$ guarantees the success of $\ell_1$ minimization. Theorem \ref{thm:conditional} states that if the random walk ever visits a small edge set, very likely it visits this set a small number of times. Based on Theorem \ref{thm:conditional}, Theorem \ref{thm:ksetconcentration} and Theorem \ref{thm:walkexpansionforsmallk_1} assert that $A$ corresponds to an bipartite expander graph. Theorem \ref{thm:sing_col_concentration} and Lemma \ref{lem:maxdegreeof2} give further regularity properties of $A$. Finally, Theorem \ref{thm:l1works} shows how expansion property implies that $\ell_1$ minimization succeeds in recovering sparse vectors.

We first give the definitions about ``good start" random walks, measurement matrix $A$ constructed from regularized random walks, bipartite graphs corresponding to $A$ and some basic assumptions about the graph we are considering.

\begin{definition}[``good start'' random walk]
\label{def:proprandom}
A random walk with a ``good start" chooses the starting vertex with a probability proportional to its degree and then performs a random walk of length $t$ over the graph. Namely, the probability that the random walk starts with the vertex $i$ with probability $\frac{d_{i}}{2|E|}$, where $d_i$ is the degree of vertex $i$ and $|E|$ is the total number of edges in the graph.
\end{definition}

\begin{definition}[matrix $A$ from regularized random walks]
\label{def:Afromproprandom}
Suppose $W_{2}$ is a walk on an undirected graph $G=(V,E)$. Then a regularized walk $W_{1}$ adapted from $W_{2}$ is a walk which visits
the same set of edges as $W_{1}$ does, but visits each such edge no more than twice. We will use the regularized walks adapted from ``good start'' random walks to construct the rows of $A$.
\end{definition}
From Lemma \ref{lem:maxdegreeof2}, we can always get a \emph{regularized} walk from a given walk. However, using regularized walks, the maximum element in $A$ is upper bounded by $2$.

\begin{definition}[bipartite graph from an $m \times |E|$ matrix $A$]
\label{def:expander}
 We construct a bipartite graph by placing $|E|$ ``edge'' nodes on the left-hand side and $m$ ``measurement'' nodes on the righthand side. An ``edge'' node $j$ on the left is connected to a ``measurement'' node $i$ on the right if and only if the $i$-th random walk goes through edge $j$. For $0<\epsilon<1$, a bipartite graph is called a $(k, \epsilon)$  expander if every set of left nodes $S$, with cardinality $|S|\leq k$, are connected to at least $(1-\epsilon)|E(S)|$ righthand side nodes (namely the neighbors of $S$, denoted by $N(S)$), where $E(S)$ is the set of links that go from $S$ to the righthand side. In other words, $|E(S)|$ is the total number of nonzero elements in the columns corresponding to $S$ in $A$, $|N(S)|$ is the number of nonzero rows in the submatrix $A_{S}$ and $N(S)\geq (1-\epsilon)|E(S)|$. $d_{min}$ and $d_{max}$ are respectively the smallest and largest degrees of the left-hand ``edge'' nodes in the bipartite graph.
\end{definition}
For example, Figure \ref{fig:bipartite} is the corresponding bipartite graph for matrix $A$ in (\ref{eq:Aexample}).

\begin{figure}
  \centering
  \includegraphics[width=0.2\textwidth]{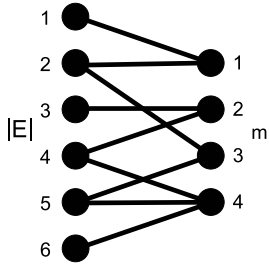}
   \caption{A Bipartite Graph Representation for $A$}
  \label{fig:bipartite}
\end{figure}

In \emph{this} section, we set $t=O(\frac{|E|}{k})$ and also assume the mixing time $T(n)$ has an upper bound as $n$ grows, which will simplify the presentation of our analysis. However, our results still extend to the case of growing $T(n)$ by setting $t=O(\frac{|E|}{T(n)k})$ and $m=O(T(n)^2k \log(n))$. We also assume that the smallest degree $D$ in the graph grows with $n$.

To prove the expansion property for $A$, for an arbitrary edge set $S$ with $|S|=k$, we bound the conditional probability that a random walk visits another edge in $S$ after it has already visited  one edge from $S$.

\begin{theorem}
\label{thm:conditional}
Let $P_{\geq 1,S}$ be the probability that a ``good start'' random walk ever visits an edge from an edge set $S$ with $|S|=k$. Let $P_{\geq 2,S}$ be the probability that such a random walk visits at least two edges from $S$. Then we can always select the random walk length in such a way that $t=O(\frac{|E|}{k})$ and $P_{\geq 2,S}\leq \eta P_{\geq 1,S}$, namely the conditional probability
\begin{eqnarray}
\label{eq:conditional}
&&P(\text{the random walk visits} > 1 ~~\text{edges in}~S|\\ \nonumber
&&\text{a random walk visits at least}~1~\text{edge in}~S)\\ \nonumber
&&\leq \eta,
\end{eqnarray}
where $0<\eta<1$ is a constant which can be made arbitrarily close to $0$. Similarly, for any $1<k'<k$, $P_{\geq (k'+1),S}\leq \eta P_{\geq k',S}$, where $P_{\geq k',S}$ $(P_{\geq (k'+1),S})$ is the probability that the random walk visits at least $k'$  ($k'+1$) edges from $S$, and $\eta$ is the same $\eta$ as in (\ref{eq:conditional}).
\end{theorem}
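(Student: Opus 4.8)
\medskip
\noindent\textbf{Proof plan.}
The plan is to exploit the fact that a ``good start'' walk is in the stationary distribution $\mu$ at every step, and then combine a first-visit (strong Markov) decomposition with the mixing assumption to control the expected number of \emph{new} edges of $S$ the walk discovers once it has entered $S$. First I would record two per-step estimates. Since the start vertex is drawn with probability $d_i/(2|E|)=\mu_i$, the walk's position before every step is distributed as $\mu$, so for any step and any edge $f=(a,b)$ the probability this step traverses $f$ is $\mu_a/d_a+\mu_b/d_b=1/|E|$. More generally, if the position before a step has \emph{any} law $\rho$ on $V$, then for any edge set $S'$ with $|S'|\le k$ one has $\sum_{f\in S'}P_\rho(\text{this step traverses }f)=\sum_{v}\rho_v\,|\{f\in S':v\in f\}|/d_v\le k/D$, since at most $k$ edges of $S'$ meet any vertex and every degree is at least $D$; and if moreover $\|\rho-\mu\|_\infty\le\delta$ the same sum is at most $k/|E|+2k\delta/D$. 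Let $q:=\sup_{v\in V}\sup_{|S'|\le k}\mathbb{E}_v[\,\#\{\ell\le t:\text{step }\ell\text{ traverses an edge of }S'\}\,]$. Splitting the first $t$ steps into the initial $T(n)$ (to which only the crude bound applies) and the rest (which, by definition of $T(n)$, begin from a law $\delta$-close to $\mu$), this gives $q\le T(n)k/D+tk/|E|+2tk\delta/D$.

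Next comes the reduction. I would decompose $P_{\ge 2,S}$ according to the first step $\tau$ at which the walk traverses an edge of $S$ and the identity $e\in S$ of that edge. Conditioned on this, the walk sits at an endpoint of $e$ after step $\tau$ and has at most $t-\tau\le t$ further steps, so by the strong Markov property the conditional probability that it subsequently traverses an edge of the $(\le k)$-edge set $S\setminus\{e\}$ is at most the expected number of such traversals, hence at most $q$; summing over $(\tau,e)$ yields $P_{\ge 2,S}\le q\,P_{\ge 1,S}$. Running the identical argument at the (stopping) time the walk first accumulates $k'$ distinct edges of $S$, applied to the $(\le k)$-edge set of still-unvisited edges of $S$, gives $P_{\ge(k'+1),S}\le q\,P_{\ge k',S}$ for every $1<k'<k$ with the \emph{same} constant $q$. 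So everything reduces to making $q$ an arbitrarily small constant.

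Finally, taking $t=\alpha|E|/k$ for a small constant $\alpha>0$ (so $t=O(|E|/k)$, as required), the term $tk/|E|$ is exactly $\alpha$, the term $2tk\delta/D=2\alpha|E|\delta/D\le\alpha cn\delta=\alpha/(4cn)=o(1)$ using $|E|\le cDn/2$, and $T(n)k/D=o(1)$ under the standing assumptions ($T(n)$ bounded, $D$ growing fast relative to $k$); hence $q\le\alpha+o(1)$, and choosing $\alpha<\eta$ finishes the proof. I expect the main obstacle to be exactly this last point: in the length-$T(n)$ window immediately after the walk first meets $S$, the walk sits at a deterministic vertex rather than near stationarity, so only the crude $k/D$ per-step estimate is available there, and its smallness is where the graph assumptions ($D$ large compared with $kT(n)$) really get used. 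It is also worth noting that the first-visit decomposition --- rather than, say, a second-moment bound on the number of visits to $S$ --- is what lets the argument avoid needing any \emph{lower} bound on $P_{\ge 1,S}$, and that ``visits more than one edge of $S$'' must be read as more than one \emph{distinct} edge so that ``distinct new edges $\le$ number of traversals'' points in the right direction.
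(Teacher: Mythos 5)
Your proposal is correct and follows essentially the same route as the paper: a first-visit decomposition via the (strong) Markov property, followed by bounding the post-visit probability of hitting $S$ again by splitting into a pre-mixing window (controlled by the crude $k/D$ per-step bound) and a post-mixing part (controlled by the $\delta$-mixing estimate), arriving at the same three-term bound $\frac{kT(n)}{D}+\frac{tk}{|E|}+\frac{2tk\delta}{D}$ up to constants. The only cosmetic difference is bookkeeping: the paper organizes the post-visit steps into $T(n)$ periodic subchains and applies a union bound over them, whereas you bound the expected number of traversals directly and invoke Markov's inequality, which yields the same estimate slightly more cleanly.
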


\begin{proof}
\label{proofthmconditional}
Suppose that the random walk ever visits one or more edges from the set $S$ and suppose the first edge from $S$ the random walk visits is edge $i \in S$, visited between time indices $j-1$ and $j$, where $1\leq j \leq t$. By denoting the two vertices connected by edge $i$ as $v_{i,1}$ and $v_{i,2}$, we also assume that at time index $j$, the random walks is at the $l$-th ($l=1~\text{or}~2$) vertex, denoted by $v_{i,l}$, of edge $i$. We denote the probability of this event by $P_{i,v_{i,l},j}$, $(i \in S)$, and further denote by $P_{\geq 2| i,v_{i,l},j}$ $(i \in S)$ the conditional probability that the random walk visits another edge from $S$ conditioned on this event (the random walk visits $i \in S$ first between time index $j-1$ and $j$, and sits at vertex $v_{i,l}$ at time index $j$).

Since the probability $P_{\geq 1, S}$ that the random walk visits at least one edge in $S$ can be decomposed as
\begin{equation*}
P_{\geq 1, S}={\sum_{i \in S} \sum_{j} \sum_{l=1}^2}P_{i,v_{i,l},j}~,
\end{equation*}
we have
\begin{equation*}
P_{\geq 2|\geq 1,S}=\frac{\sum_{i \in S} \sum_{j} \sum_{l=1}^2P_{i,v_{i,l},j} \times P_{\geq 2|i,v_{i,l},j}} {{\sum_{i \in S} \sum_{j} \sum_{l=1}^2}P_{i,v_{i,l},j}},
\end{equation*}
where $P_{\geq 2|\geq 1,S}$ is the conditional probability that the random walk visits at least one more edge in $S$ after already visiting one edge in $S$.

Now if we can show the conditional probability $P_{\geq 2|i, v_{i,l}, j}$ is small enough for every possible $i$, $j$ and $l$,  we will get the conclusion in the theorem. By the Markov property of the defined random walk, $P_{\geq 2| i,v_{i,l},j}$ $(i \in S)$  is upper bounded by the conditional probability that the random walk visits at least one edge of $S$ after time index $j$, conditioned on that the walk sits at the vertex $v_{i,l}$ at time index $j$.

So we only need to show that
\begin{eqnarray*}
&&P(\text{the random walk visits $S$ again after time index $j$}\\ \nonumber
&&|\text{the random walk is at vertex $v_{i,l}$ at time index $j$}), \nonumber
\end{eqnarray*}
is small enough or can be made arbitrarily close to $0$ if we choose the length of the random walk appropriately.
Before we proceed to upper bound this probability, we present the following lemma about the conditional probability that a random walk visits a certain edge after the mixing time $T(n)$.

\begin{lemma}
\label{lem:p_revisiting}
For any vertex $v_{i,l}$ and any time index $j$, if $z \geq T(n)$ (the $\delta$-mixing time), the conditional probability $P_{j+z, e| v_{i,l},j}$ that the random walk visits one certain edge $e$ between time index $j+z$ and $j+z+1$ is between $\frac{1}{|E|}-\frac{2\delta}{D}$ and $\frac{1}{|E|}+\frac{2\delta}{D}$, where $D$ is the smallest degree for the vertices in the graph.
\end{lemma}

\begin{proof}
At time index $j+z$, no matter what vertex the random walk is at time index $j$, by the definition of mixing time, the random walk will visit the two vertices that define edge $e$ with probabilities in the regions$[\frac{d_{e}^{1}}{2|E|}-\delta, \frac{d_{e}^{1}}{2|E|}+\delta]$ and $[\frac{d_{e}^{2}}{2|E|}-\delta, \frac{d_{e}^{2}}{2|E|}+\delta]$ respectively. So between time index $j+z$ and $j+z+1$, the probability that the random walk visits edge $e$ will be lower bounded by
\begin{eqnarray}
&~&(\frac{d_{e}^{1}}{2|E|}-\delta) \times \frac{1}{d_{e}^1}+(\frac{d_{e}^{2}}{2|E|}-\delta) \times \frac{1}{d_{e}^2}\\\nonumber
&=& \frac{1}{|E|}-\frac{\delta}{d_{e}^{1}}-\frac{\delta}{d_{e}^{2}}\geq \frac{1}{|E|}-\frac{2\delta}{D},\nonumber
\end{eqnarray}
and similarly, we have the upper bound.
\end{proof}

 Building on Lemma \ref{lem:p_revisiting}, to get the probability that the random walk visits another edge from $S$ conditioned on the fact it sits at node $v_{i,l}$ at time $j$, we divide the random walk after time index $j$ into $T(n)$ edge chains $f_1, f_2, f_3,...,f_{T(n)}$ constructed in the following way. The $s$-th chain $f_{s+1}$ ($0\leq s \leq T(n)-1$) starts from the edge traversed by the random walk between time indices $j+s$ and $j+s+1$. Then the $s$-th chain will include sequentially the edges traversed by the random walk between time index pairs $(j+s+T(n), j+s+1+T(n))$, $(j+s+2T(n), j+s+1+2T(n))$, ..., until the random walk ends. Namely, we sample the random walk (after time $j$) with a period of $T(n)$ with $T(n)$ different starting phases.

Without loss of generality, we look at a chain $f_{s+1}$. At time index $j+s$, the conditional probability (conditioned on the fact the random walk is at vertex $v_{i,l}$ at time index $j$) that the next edge traversed by $f_{s+1}$ is from $S$ is at most $\frac{k}{D}$, because no matter what vertex the random walk reached at time index $j+s$, there are at least $D$ edges connected to that vertex. Now we look at the probability $P_{s}$ that $f_{s+1}$ does not traverse any edge from $S$ after time index $s+j+T(n)$ (conditioned on the fact the random walk is at vertex $v_{i,l}$ at time index $j$). Since all the time indices are separated from each other and from vertex $v_{i,l}$ by at least $T(n)$ time slots, from the mixing time definition, $P_{s}$ is at least
$\left(1-(\frac{k}{|E|}+\frac{2k\delta}{D})\right)^{\lceil\frac{t}{T(n)} \rceil}$, where $\lceil \cdot \rceil$ represents the ceiling operation.

So the (conditional) probability that $f_{s}$ visits $S$ after time index $j$ is upper bounded by
\begin{equation*}
\frac{k}{D}+1-\left(1-(\frac{k}{|E|}+\frac{2k\delta}{D})\right)^{\lceil\frac{t}{T(n)} \rceil}   .
\end{equation*}

Using  a union bound over the $T(n)$ chains, the conditional probability that the random walk visits $S$ again will be upper bounded by
\begin{equation*}
\frac{kT(n)}{D}+T(n)\left( 1-\left(1-(\frac{k}{|E|}+\frac{2k\delta}{D})\right)^{\lceil\frac{t}{T(n)} \rceil}   \right),
\end{equation*}
which can be further upper bounded by
\begin{eqnarray*}
&&\frac{kT(n)}{D}+T(n)\times \left( \frac{k}{|E|}+\frac{2k\delta}{D}\right)\times {\lceil\frac{t}{T(n)} \rceil} \\
&\leq& \frac{kT(n)}{D}+\frac{(t+T(n))k}{|E|}+\frac{2k(t+T(n))\delta}{D}.
\end{eqnarray*}
So we can always take $t$ scaling as $O(\frac{|E|}{k})$ to make this probability arbitrarily small (of course $k$ must also make the first and third term small enough, which is easily true based on the assumptions on $D$, $T(n)$ and $\delta=\frac{1}{(2cn)^2}$ ).

Moreover, using the same set of arguments, we can extend this conclusion to any $1<k'<k$.
\end{proof}
\begin{theorem}
\label{thm:ksetconcentration}
With $t=O(\frac{|E|}{k})$, for any arbitrary edge set $S$ with cardinality $k$, if we take $m=O(T(n)k \log(n))$ ``good start'' random walks, then with probability $1-O(|E|^{-k})$, the number of walks that traverse at least one edge of $S$ is $g=\Theta(k\log(n))$; moreover, with probability $1-O(|E|^{-k})$, the total sum number of edges from $S$ visited by the $m$ random walks will be upper bounded by $r=(1+\epsilon')\frac{g}{1-\eta}$, where $\eta$ is the conditional probability appearing in Theorem \ref{thm:conditional} and $\epsilon'$ is an arbitrarily small number.
\end{theorem}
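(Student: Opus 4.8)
The plan is to treat both claims as Chernoff-type concentration statements over the $m$ independent ``good start'' walks, after first pinning down the per-walk probabilities with Theorem~\ref{thm:conditional} and Lemma~\ref{lem:p_revisiting}. For the $i$-th walk let $X_i$ be the indicator that it traverses some edge of $S$, and let $N_i$ be the number of \emph{distinct} edges of $S$ it traverses, so that $g=\sum_{i=1}^m X_i$ is the number of walks hitting $S$ and the quantity to be bounded is $R:=\sum_{i=1}^m N_i$, which is exactly $|E(S)|$ in the bipartite graph of Definition~\ref{def:expander} (while $g=|N(S)|$). The $X_i$ are i.i.d.\ Bernoulli$(P_{\ge1,S})$, and conditionally on $g$ and on which walks hit $S$ the corresponding $N_i$ are independent copies of $(N\mid N\ge1)$, where $N$ is the number of distinct edges of $S$ hit by a generic walk.

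The first step is to establish $P_{\ge1,S}=\Theta(1/T(n))$ (hence $=\Theta(1)$ under the bounded mixing-time assumption of this section). The upper bound is immediate from $g\le m=O(T(n)k\log n)$, or alternatively from a first-moment estimate on the number of traversals of $S$ done exactly as inside the proof of Theorem~\ref{thm:conditional} (splitting the walk into $T(n)$ phases and using Lemma~\ref{lem:p_revisiting} to bound each post-mixing step by $k/|E|+2k\delta/D$, giving expected traversals $O(kt/|E|+kT(n)/D+\ldots)=O(1/T(n))$ once $t=\Theta(|E|/k)$). For the matching lower bound, fix $e\in S$; Lemma~\ref{lem:p_revisiting} shows that along any set of $\Theta(t/T(n))$ step-times spaced more than $T(n)$ apart the walk traverses $e$ at each such step with conditional probability at least $\tfrac1{|E|}-\tfrac{2\delta}{D}\ge\tfrac1{2|E|}$ \emph{regardless of the past}, so iterating gives $\mathbb{P}(\text{walk traverses }e)\ge1-(1-\tfrac1{2|E|})^{\Theta(t/T(n))}=\Omega(\min\{1,1/(kT(n))\})$, and summing over $e\in S$ yields $\mathbb{E}[N]=\Omega(1/T(n))$; since iterating $P_{\ge k'+1,S}\le\eta P_{\ge k',S}$ from Theorem~\ref{thm:conditional} gives $\mathbb{E}[N]\le P_{\ge1,S}/(1-\eta)$, we conclude $P_{\ge1,S}=\Theta(1/T(n))$. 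Hence $\mathbb{E}[g]=m\,P_{\ge1,S}=\Theta(k\log n)$, with a constant that can be made as large as we wish by enlarging the constant hidden in $m=O(T(n)k\log n)$; a two-sided Chernoff bound on $g$ then gives $\mathbb{P}(|g-\mathbb{E}[g]|>\epsilon\,\mathbb{E}[g])\le 2e^{-c\epsilon^2\mathbb{E}[g]}$, which is $O(|E|^{-k})$ once that constant is large enough (using $|E|\le n^2$). This proves the first assertion $g=\Theta(k\log n)$.

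For the second assertion I would condition on $g$ and on which walks hit $S$. Iterating $P_{\ge k'+1,S}\le\eta P_{\ge k',S}$ gives $\mathbb{P}(N\ge j\mid N\ge1)\le\eta^{\,j-1}$, so $(N\mid N\ge1)$ is stochastically dominated by a geometric variable; in particular $\mathbb{E}[N\mid N\ge1]\le\tfrac1{1-\eta}$ and its moment generating function is finite for $\theta<\log(1/\eta)$. Therefore $\mathbb{E}[R\mid g]\le\tfrac{g}{1-\eta}$, and the standard Chernoff bound for a sum of $g$ i.i.d.\ geometric-dominated (sub-exponential) variables gives $\mathbb{P}\big(R>(1+\epsilon')\tfrac{g}{1-\eta}\,\big|\,g\big)\le e^{-c(\epsilon',\eta)\,g}$ with $c(\epsilon',\eta)>0$ for every fixed $\epsilon'>0$. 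Averaging over $g$, splitting off the $O(|E|^{-k})$-probability event from the first part on which $g$ fails to be $\Theta(k\log n)$, and once more enlarging the constant in $m$ so that $c(\epsilon',\eta)g\ge k\log|E|$ on the good event, yields $R\le r=(1+\epsilon')\tfrac{g}{1-\eta}$ with probability $1-O(|E|^{-k})$; a union bound over the two bad events gives the joint statement.

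I expect the main obstacle to be the first step: the two-sided estimate $P_{\ge1,S}=\Theta(1/T(n))$, and especially its lower bound, since this forces one to control the walk's dependence structure through Lemma~\ref{lem:p_revisiting} rather than pretend the steps are independent, and to be careful about off-by-$O(1)$ issues in how finely one samples the walk. A secondary but unavoidable bookkeeping point is that \emph{both} Chernoff exponents must be shown to be $\Omega(k\log n)$ with a constant that is freely tunable through the constant in $m$, so that the failure probabilities genuinely come out as $O(|E|^{-k})$ rather than merely $o(1)$ or $n^{-\Theta(1)}$.
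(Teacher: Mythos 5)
Your proposal is correct and follows essentially the same route as the paper: a Chernoff bound on the Bernoulli indicators (with the hitting probability lower-bounded by sampling the walk at multiples of $T(n)$, as in the paper's Lemma \ref{lem:setSprobability}) for the first claim, and stochastic domination of the per-walk visit count by a geometric law derived from Theorem \ref{thm:conditional}, followed by another Chernoff bound, for the second. The only differences are cosmetic --- you condition on $g$ where the paper works with the unconditional compound variables $Y_i'$, and you are somewhat more explicit about the upper bound on $P_{\geq 1,S}$ needed for the two-sided $\Theta(k\log n)$ statement.
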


\begin{proof}
We start by providing a lower bound on the probability that the random walk ever visits $S$.
\begin{lemma}
\label{lem:setSprobability}
The probability $P_{\geq 1}$ that a random walk of length $t$ visits an edge set $S$ of cardinality $k$ will be $\Omega(\frac{tk}{T(n)|E|})$.
\end{lemma}

\begin{proof}
We consider a chain of period $T(n)$ and focus on the time slots starting with time index $0$, $T(n)$, $2T(n)$.... Note that at time index $0$, the random walk has achieved its stationary distribution due to the manner by which we pick the starting vertex. Similar to the proof of Theorem \ref{thm:conditional}, from the Markov property and the mixing time definition, the probability $P_{\geq 1}$ that the random  walk visits an edge in $S$ is lower bounded by
\begin{eqnarray*}
&&1-\left( 1-\left(\frac{k}{|E|}-\frac{2k\delta}{D}\right)\right)^{   \lfloor \frac{t}{T(n)} \rfloor} \\
&\geq& 1-e^{\lfloor \frac{t}{T(n)} \rfloor \log\left(1-\left(\frac{k}{|E|}-\frac{2k\delta}{D} \right)\right)}\\
&\geq& 1-e^{-\lfloor \frac{t}{T(n)} \rfloor \left(\frac{k}{|E|}-\frac{2k\delta}{D} \right)}= \Omega(\frac{t k}{|E|T(n)}).
\end{eqnarray*}
\end{proof}

Let $X_{i}$, $1 \leq i \leq m$, be $m$ independent Bernoulli random variables indicating whether the $i$-th random walk visits the set $S$,  so each of them takes value `1' with probability $P_{\geq 1}$ and takes value `0' with probability $(1-P_{\geq 1})$. Let $X=\sum_{i=1}^{m}X_{i}$ be the total number of walks that visit the set $S$. When $t=O(\frac{|E|}{k})$ and $m=O(T(n)k \log(n))$, the expected value of $X$ is $P_{\geq 1}m=P_{\geq 1} O(T(n)k \log(n))$. Now we show that the actual number of random walks that visit $S$ concentrates around $\Theta(k\log(n))$.

From a Chernoff bound on $X$, the probability that $X \geq P'm$ when $P'\geq P_{\geq 1}$ (or $X \leq P'm$ when $P' \leq P_{\geq 1}$) is upper bounded by $e^{-m Diff(P'||P_{\geq 1})}$,
where $Diff(P'||P_{\geq 1})$ is the relative entropy
\begin{equation*}
P'\log\left(\frac{P'}{P_{\geq 1}}\right)+(1-P')\log\left(\frac{1-P'}{1-P_{\geq 1}}\right).
\end{equation*}

So as long as
\begin{equation*}
m \geq  \frac{k \log(|E|)}{Diff(P'||P_{\geq 1})},
\end{equation*}
with probability $1-O(|E|^{-k})$, $X$ will concentrate around its mean value $mP_{\geq 1}$ (not going above or below $mP'$). If $P'=(1-\epsilon')P_{\geq 1}$ or $P'=(1+\epsilon')P_{\geq 1}$ for a sufficiently small $\epsilon'>0$,
\begin{equation*}
Diff(P'||P_{\geq 1})\thickapprox \frac{{\epsilon'}^2P_{\geq 1}}{1-P_{\geq 1}}.
\end{equation*}
So from Lemma \ref{lem:setSprobability}, when $m=O(T(n)k\log(n))$, with probability $1-O(|E|^{-k})$, the number of non-all-zero rows will be $g=\Theta(k\log(n))$.

Now let $Y_{i}$, $1\leq i\leq m$, be $m$ independent random variables indicating how many edges from $S$ the $i$-th random walk visits.  Let $Y=\sum_{i=1}^{m}Y_{i}$ be the total number of edges from $S$ visited by $m$ independent random walks.  From Theorem \ref{thm:conditional}, when $m=O(T(n)k \log(n))$, then the probability that $Y \geq r=(1+\epsilon')\frac{g}{1-\eta}$ will be no bigger than the probability $Y'= \sum_{i=1}^{m}Y_{i}'\geq r$, where $Y_{i}'$s are i.i.d. nonnegative integer-valued random variables and each of these $m$ random variables takes value `0' with probability $1-P_{\geq 1}$, `1' with probability $P_{\geq 1}(\eta-\eta^2)$, value `2' with probability $P_{\geq 1}(\eta^2-\eta^3)$,... and so on. So for each $1\leq i\leq m$, $\mathbb{E}(Y_{i}')=\frac{P_{\geq 1}}{1-\eta}$, and $\mathbb{E}(Y')=\frac{mP_{\geq 1}}{1-\eta}$. For any $\epsilon'>0$, by a standard Chernoff bound for $Y'$, with $m=O(T(n)k\log(n))$, $Y \geq (1+\epsilon')\frac{mP_{\geq 1}}{1-\eta}$ with probability at most $O(|E|^{-k})$. (We however choose not to present the explicit large deviation exponent for $Y$ in this paper due to its complicated expression.)
\end{proof}

Since there are at most $\binom{|E|}{k}$ edge sets of cardinality $k$, by a union bound and Theorem \ref{thm:ksetconcentration} (where we replace $n$ with $|E|\leq n^2$), with probability $1-o(1)$, for all the edge sets $S$ with cardinality $k$, the number $N(S)$ of random walks that visit $S$ will be at least $\frac{1-\eta}{1+\epsilon'}|E(S)|$. Note that this corresponds to the expansion concept we mentioned at the beginning of this section.

By repeating the previous arguments for smaller edge sets, we know with high probability, the expansion properties for all the edge sets with cardinality $\leq k$ also hold when $m=O(T(n)k\log(n))$. So in the end, we have the following theorem about expansion.
\begin{theorem}
\label{thm:walkexpansionforsmallk_1}
 If $t=O(\frac{|E|}{k})$, then a measurement matrix generated by $m=O(T(n)k \log(n))$ ``good start" random walks with length $t$ will be
an $(k, 1-\frac{1-\eta}{1+\epsilon'})$ expander, where $\eta$ is the same $\eta$ appearing in Theorem \ref{thm:conditional} and $\epsilon'>0$ is any positive number independent of $\eta$.
\end{theorem}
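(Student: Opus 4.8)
The plan is to assemble pieces that are essentially already in hand. First I would translate the expander condition of Definition~\ref{def:expander} into the language of Theorem~\ref{thm:ksetconcentration}. Fix an edge set $S$ with $|S|=k$. By construction, the number $|E(S)|$ of nonzero entries of $A$ in the columns indexed by $S$ equals $\sum_{i=1}^{m}|\{j\in S:\ \text{walk }i\text{ visits }j\}| = Y$, and the number $N(S)$ of nonzero rows of the submatrix $A_S$ equals the number $g$ of walks that visit at least one edge of $S$; so the expander inequality $N(S)\ge(1-\epsilon)|E(S)|$ is precisely $g\ge(1-\epsilon)Y$.

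Next I would apply Theorem~\ref{thm:ksetconcentration} with $t=O(\frac{|E|}{k})$ and $m=O(T(n)k\log(n))$: on an event of probability $1-O(|E|^{-k})$ we have, for this fixed $S$, $g=\Theta(k\log(n))$ and $Y\le r=(1+\epsilon')\frac{g}{1-\eta}$. Rearranging the second bound gives $g\ge\frac{1-\eta}{1+\epsilon'}\,Y$, i.e. $N(S)\ge\frac{1-\eta}{1+\epsilon'}\,|E(S)|$; thus with the choice $\epsilon:=1-\frac{1-\eta}{1+\epsilon'}$ the set $S$ meets the expansion requirement on this high-probability event.

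Then I would remove the dependence on the particular $S$. There are $\binom{|E|}{k}\le|E|^{k}$ sets of size exactly $k$, so I would take the constant hidden in $m=O(T(n)k\log(n))$ large enough (depending on $\eta$ and $\epsilon'$, but not on $n$ or $k$) that the Chernoff exponent supplied by Theorem~\ref{thm:ksetconcentration}, which is of order ${\epsilon'}^{2}mP_{\geq 1}=\Theta({\epsilon'}^{2}k\log(n))$ since $P_{\geq 1}=\Omega(1/T(n))$ when $t=\Theta(\frac{|E|}{k})$, exceeds $k\log(|E|)\le 2k\log(n)$ with room to spare, making the per-set failure probability $o(|E|^{-k})$; summing over the at most $|E|^{k}$ sets keeps the overall failure probability $o(1)$. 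The identical argument, now invoking the extensions of Theorems~\ref{thm:conditional} and \ref{thm:ksetconcentration} to every cardinality $k'<k$, disposes of all smaller edge sets, and a further union bound over the at most $k$ possible cardinalities preserves the $o(1)$ bound. This yields that $A$ is a $(k,\,1-\frac{1-\eta}{1+\epsilon'})$ expander with probability $1-o(1)$, which is the claim.

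The step I expect to be the main obstacle is exactly this union-bound bookkeeping: one has to verify that the large-deviation exponent of Theorem~\ref{thm:ksetconcentration} dominates the entropy term $\log\binom{|E|}{k}\approx k\log(|E|)$ \emph{uniformly} over all cardinalities $\le k$ --- this is precisely where the additional $\log(n)$ factor in the measurement count is consumed --- while keeping $\epsilon'$, hence $\epsilon$, independent of $n$ and $k$ (only the multiplicative constant absorbed into $m$ is allowed to grow as $\epsilon'\to 0$). A lesser but worth-stating point is the identification $|E(S)|=Y$ and $N(S)=g$ used in the first step, which is valid because each nonzero entry of $A$ in a column of $S$ records exactly one (walk, visited-edge) incidence, regardless of the $0/1/2$ values produced by the regularized walks of Definition~\ref{def:Afromproprandom}.
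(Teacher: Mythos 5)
Your proposal is correct and follows essentially the same route as the paper: the paper also derives the expansion property by applying Theorem~\ref{thm:ksetconcentration} to a fixed set $S$ (identifying $N(S)$ with the number of visiting walks and $|E(S)|$ with the total visit count $Y$), taking a union bound over the $\binom{|E|}{k}$ sets of size $k$, and then repeating the argument for all smaller cardinalities. Your explicit check that the Chernoff exponent dominates $k\log(|E|)$ uniformly is a welcome tightening of a step the paper leaves implicit, but it is not a different approach.
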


Now we want to determine the large degree $d_{max}$ and the smallest degree $d_{min}$ for the bipartite expander.
Note for edge $e$, the number of visiting random walks is equal to the degree of edge $e$'s corresponding ``edge" node in the bipartite graph. Theorem \ref{thm:sing_col_concentration} bounds $d_{max}$ and $d_{min}$.
\begin{theorem}
\label{thm:sing_col_concentration}
Choose the random walk parameters appropriately. Then the probability that a random walk visits a certain edge $e$ will be between
\begin{equation*}
P_{min}=1-(1-\frac{1}{|E|}+\frac{2\delta}{D})^{\lfloor\frac{t}{T(n)}\rfloor},
\end{equation*} and
\begin{equation*}
 P_{max}=T(n)\left(1-(1-\frac{1}{|E|}-\frac{2\delta}{D})^{\lceil\frac{t}{T(n)}\rceil}\right).
\end{equation*}
For an arbitrary $\epsilon'>0$, with probability $1-o(1)$, the number of nonzero elements in every columns of $A$ is between $(1-\epsilon') P_{min} m$ and $(1+\epsilon') P_{max}m$, when we take $m=O(T(n)k\log(n))$ random walks.
\end{theorem}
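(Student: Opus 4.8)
The plan is to treat the two halves of the statement in turn: first bound the probability $p_e$ that a single ``good start'' walk visits a fixed edge $e$, showing $P_{min}\le p_e\le P_{max}$; then, since the number of nonzeros in column $e$ of $A$ equals exactly the number of the $m$ walks that visit $e$ (regularization changes only the nonzero \emph{values} in a column, not which entries vanish), concentrate this count around $mp_e$ by a Chernoff bound and a union bound over the $|E|$ columns. Here ``choosing the random walk parameters appropriately'' means keeping $t=\Theta(|E|/k)$ as already fixed in this section and taking the implicit constant in $m=O(T(n)k\log n)$ large enough.

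For the single-walk estimate I would reuse the period-$T(n)$ chain decomposition from the proofs of Theorem~\ref{thm:conditional} and Lemma~\ref{lem:setSprobability}. Lower bound: a good-start walk sits at the stationary distribution $\mu$ at time $0$ and hence at every later time, so by Lemma~\ref{lem:p_revisiting} each of the $\lfloor t/T(n)\rfloor$ samples of the phase-$0$ chain (times $0,T(n),2T(n),\dots$) traverses $e$ with conditional probability at least $\tfrac1{|E|}-\tfrac{2\delta}{D}$ given the history; multiplying the complementary conditional probabilities gives $\Prob(e\text{ never visited})\le\big(1-\tfrac1{|E|}+\tfrac{2\delta}{D}\big)^{\lfloor t/T(n)\rfloor}$, i.e.\ $p_e\ge P_{min}$. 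Upper bound: split the walk after time $0$ into $T(n)$ phase chains of at most $\lceil t/T(n)\rceil$ samples each; Lemma~\ref{lem:p_revisiting} bounds the probability that a given phase chain ever hits $e$ by $1-\big(1-\tfrac1{|E|}-\tfrac{2\delta}{D}\big)^{\lceil t/T(n)\rceil}$, and a union bound over the $T(n)$ phases gives $p_e\le P_{max}$.

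For the concentration step, let $Z_j\in\{0,1\}$ indicate that walk $j$ visits $e$; the $Z_j$ are i.i.d.\ $\mathrm{Bernoulli}(p_e)$ and the number of nonzeros in column $e$ is $Z=\sum_{j=1}^m Z_j$ with $\stexp(Z)=mp_e\in[mP_{min},mP_{max}]$. A multiplicative Chernoff bound gives $\Prob\big(Z\notin[(1-\epsilon')mp_e,(1+\epsilon')mp_e]\big)\le 2e^{-\epsilon'^2 mp_e/3}$. A short computation with $P_{min}$ (using $t=\Theta(|E|/k)$ and $\delta=1/(2cn)^2$) gives $mp_e\ge mP_{min}=\Theta(\log n)$, so by enlarging the constant in $m$ we may force $\epsilon'^2 mp_e/3\ge 3\log n$, whence each of the at most $|E|\le n^2$ columns fails with probability $\le 2n^{-3}$. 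A union bound then leaves total failure probability $O(1/n)=o(1)$, and on the complementary event $(1-\epsilon')mp_e\ge(1-\epsilon')mP_{min}$ and $(1+\epsilon')mp_e\le(1+\epsilon')mP_{max}$ for every column, which is precisely the assertion.

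The main obstacle is the single-walk upper bound $p_e\le P_{max}$: the naive estimate ``each step hits $e$ with probability $\approx1/|E|$, hence over $t$ steps about $t/|E|$'' is not legitimate because of the strong dependence along the walk, and it is precisely the period-$T(n)$ chaining --- at the cost of the factor $T(n)$ in $P_{max}$ --- that makes it rigorous. One must also check that the good-start hypothesis is genuinely invoked (so that even the first sample of every phase chain is at stationarity and Lemma~\ref{lem:p_revisiting} applies throughout), and that the choice $\delta=1/(2cn)^2$ together with $D\to\infty$ keeps the additive $2\delta/D$ terms negligible beside $1/|E|$; after that only routine Chernoff bookkeeping remains, the one point of vigilance being that the constant in $m=O(T(n)k\log n)$ is large enough for the per-column tail to beat the $|E|$-fold union bound.
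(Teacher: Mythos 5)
Your proposal is correct and follows essentially the same route as the paper: the lower bound via the single period-$T(n)$ sampled chain started at stationarity, the upper bound via a union bound over the $T(n)$ phase chains (which is exactly where the factor $T(n)$ in $P_{max}$ comes from), and then a Chernoff bound plus a union bound over the $|E|\le n^2$ columns. Your added remarks --- that regularization does not change which column entries are nonzero, and that $mP_{min}=\Theta(\log n)$ so the per-column tail beats the union bound --- are points the paper leaves implicit, but they do not change the argument.
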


\begin{proof}
First, we establish the lower bound. We focus on the time slots starting with time index $0$, $T(n)$, $2T(n)$,.... By the definition of mixing time, the probability that this sampled walk does not visit edge $e$ is upper bounded by $(1-\frac{1}{|E|}+\frac{2\delta}{D})^{\lfloor \frac{t}{T(n)}\rfloor}$, so we have a corresponding lower bound $1-(1-\frac{1}{|E|}+\frac{2\delta}{D})^{\lfloor\frac{t}{T(n)}\rfloor}$.

For the upper bound, we consider $T(n)$ chains $f_1, f_2, ..., f_{T(n)}$of period $T(n)$.  Then for each chain, the probability that chain does not visit the edge $e$ will be lower bounded by $(1-\frac{1}{|E|}-\frac{2\delta}{D})^{\lceil\frac{t}{T(n)}\rceil}$, and so the probability that the sampled walk visits edge $e$ will be upper bounded by
\begin{equation*}
1-(1-\frac{1}{|E|}-\frac{2\delta}{D})^{\lceil\frac{t}{T(n)}\rceil}.
\end{equation*}
By a union bound over the $T(n)$ chains, the probability that the random walk ever visits edge $e$ is upper bounded by
\begin{equation*}
P_{max}=T(n)\left(1-(1-\frac{1}{|E|}-\frac{2\delta}{D})^{\lceil \frac{t}{T(n)}\rceil}\right).
\end{equation*}
When we take $t=O(\frac{|E|}{k})$, the lower bound and upper bound scale as $O(\frac{1}{kT(n)})$ and $O(\frac{1}{k})$ respectively. So by similar Chernoff bound arguments as in Theorem \ref{thm:ksetconcentration}, if $m=O(T(n)k\log(n))$, with high probability, simultaneously for all the columns, the number of non-all-zero elements concentrate between $O(\log(n))$ and $O(T(n)\log(n))$ respectively.
\end{proof}

\begin{lemma}
\label{lem:maxdegreeof2}
Any walk $W$ taken over an undirected graph can be converted to a walk that visits the same set of edges and visits each edge no more than twice.
\end{lemma}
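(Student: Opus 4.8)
The plan is to reduce the statement to the classical fact that a connected multigraph in which every vertex has even degree possesses an Eulerian circuit. First I would let $H=(V_H,E_H)$ denote the subgraph of $G$ whose edge set $E_H$ is \emph{exactly} the set of edges traversed by the walk $W$, with $V_H$ the set of endpoints of these edges. Since $W$ is a walk, any two consecutive edges of $W$ share a vertex, so the edges of $W$ all lie in one connected component; hence $H$ is connected.

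Next I would form the multigraph $\widehat{H}$ obtained from $H$ by replacing every edge of $H$ with two parallel copies. Then $\widehat{H}$ is still connected, and the degree of every vertex $v$ in $\widehat{H}$ equals twice its degree in $H$, hence is even. By Euler's theorem, $\widehat{H}$ admits an Eulerian circuit $W'$, i.e., a closed walk using each edge of $\widehat{H}$ exactly once. Reading $W'$ back in $H$, it is a (closed) walk in $G$ that traverses each edge of $E_H$ exactly twice and traverses no other edge. Since $W$ is allowed to revisit edges (indeed, here exactly twice) and the notion of a walk in the paper includes closed walks, $W'$ is an admissible output: it visits the same edge set $E_H$ as $W$ and visits each such edge no more than twice, which is the claim.

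I expect no serious obstacle; the only points requiring care are (i) the observation that the edge set of a walk always induces a connected subgraph, which is exactly what lets Euler's theorem apply, and (ii) the bookkeeping that one Eulerian pass over the doubled graph corresponds to exactly two passes over each original edge. If an explicit construction were preferred over invoking Euler's theorem, one could instead perform a depth-first traversal of a spanning tree of $H$, splicing in each non-tree edge twice at the moment it is first encountered; this yields the same conclusion, and I would note it as a remark while relying on the Eulerian-circuit argument for brevity.
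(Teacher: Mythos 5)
Your argument is correct, and it takes a genuinely different route from the paper. You reduce the claim to Euler's theorem: double every edge of the subgraph $H$ spanned by the walk's edges, note that $H$ is connected (consecutive edges of a walk share a vertex) and that doubling makes all degrees even, and extract an Eulerian circuit, which traverses each original edge exactly twice. The paper instead gives a direct inductive construction on the number of visited vertices: it picks a vertex $N$ whose removal leaves the walk's vertex set connected (a leaf of a spanning tree rooted at any walk vertex), regularizes the walk on the remaining $n$ vertices by induction, and then splices in an out-and-back detour over each edge joining $N$ to the rest, so that each such edge is traversed exactly twice. Your reduction is shorter and leans on a classical theorem, giving a clean closed walk with each edge used exactly twice; the paper's induction is more elementary and self-contained, and is essentially the depth-first-traversal alternative you mention in your closing remark. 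Both establish the bound of two traversals per edge, which is all the lemma and its use in bounding the entries of $A$ require. The one point worth stating explicitly in your write-up is that a closed walk is an admissible walk here (it is, since the paper's measurement paths are arbitrary walks allowed to revisit edges), which you do note.
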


\begin{proof} We induct on the number of nodes that the random walk visits. Apparently, for up to $2$ nodes, this claim is true. We assume this claim is true for any walk that visits up to $n$ nodes. If a random walk visits $(n+1)$ nodes, there must be a node $N$ such that when $N$ is deleted from the walk, the remaining parts of the walk remain connected. In fact, take an arbitrary node $i$ on the random walk, then all the other nodes are on a spanning tree whose root is node $i$. Then any leaf node of this tree can be deleted while all the remaining nodes remain connected. By the induction assumption, we know there exists a walk $W'$ that visits each edge of the remaining $n$-node graph for at least once but for at most twice.  Then we can construct another walk $W''$ over the $(n+1)$-node network in the following way. We start on walk $W'$. When walk $W'$ visits a node $j$ that is connected to node $N$ through an edge $e_1$ in the walk $W$, we will divert from node $j$ via edge $e_1$ to visit node $N$ and come back along the same edge to node $j$.  From there, we continue in a similar fashion along the walk $W'$ to complete constructing the new walk $W''$, which visits every edge of $W$, but no more than twice.
\end{proof}

\begin{theorem}
\label{thm:l1works}
With probability $1-o(1)$, $\ell_1$ minimization can recover any $\Theta(k)$-sparse edge vector measured using matrix $A$ generated from $m=O(T(n)k\log(n))$ independent ``good start'' regularized random walks of length $t=O(\frac{|E|}{k})$.
\end{theorem}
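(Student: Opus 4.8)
The plan is to convert the bipartite expansion established above into the null space property for $\ell_1$ minimization. Recall that $\min\{\|\x\|_1 : A\x = \y\}$ recovers every $s$-sparse vector exactly if and only if every nonzero $w \in \mathcal{N}(A)$ satisfies $\|w_S\|_1 < \|w_{S^c}\|_1$ for all index sets $S$ with $|S| \leq s$. Thus the probabilistic content is already in hand (Theorems \ref{thm:walkexpansionforsmallk_1} and \ref{thm:sing_col_concentration}, Lemma \ref{lem:maxdegreeof2}), and what remains is a deterministic, combinatorial argument turning the expansion constant into a null-space constant strictly below $1$.

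First I would fix a single event of probability $1 - o(1)$ on which: (i) by Theorem \ref{thm:walkexpansionforsmallk_1} the bipartite graph of $A$ is a $(k,\epsilon)$ expander simultaneously for all left sets of size up to $k$, where $\epsilon = 1 - \tfrac{1-\eta}{1+\epsilon'}$ can be driven as close to $0$ as we wish by taking $\eta$ and $\epsilon'$ small; (ii) by Theorem \ref{thm:sing_col_concentration} every ``edge'' node on the left has degree between $d_{min} = \Theta(\log n)$ and $d_{max} = \Theta(T(n)\log n)$, so the ratio $\rho := d_{max}/d_{min} = \Theta(T(n))$ is a bounded constant under the running assumption $T(n) = O(1)$; and (iii) by Definition \ref{def:Afromproprandom} and Lemma \ref{lem:maxdegreeof2} every nonzero entry of $A$ lies in $\{1,2\}$.

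Next I would invoke, in suitably adapted form, the null space lemma underlying LP/$\ell_1$ decoding over expanders (Berinde, Gilbert, Indyk, Karpinski and Strauss): if $M$ is the $0/1$ adjacency matrix of a left-regular $(s,\epsilon)$ bipartite expander with $\epsilon$ below a small absolute threshold (e.g. $1/6$), then every $w \in \mathcal{N}(M)$ obeys $\|w_S\|_1 \leq \tfrac{2\epsilon}{1-2\epsilon}\|w\|_1 < \tfrac12\|w\|_1$ for all $|S| \leq s$, which is exactly the null space property. Its two ingredients are (a) the unique-neighbor property --- a short incidence count shows that at most $\epsilon|E(S)|$ right nodes of $N(S)$ see two or more edges of $E(S)$ and at most $2\epsilon|E(S)|$ edges reach them, so at least $(1-2\epsilon)|E(S)|$ edges of any $S$ with $|S|\le s$ land on \emph{distinct} unique neighbors; and (b) the cancellation forced by $Mw = 0$ on those unique-neighbor rows, integrated over the level sets of $|w|$: on such a row the lone coefficient $w_{j}$, $j\in S$, equals minus the contribution of $S^c$, and each $j' \in S^c$ is charged in at most $d$ rows. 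In our setting $M=A$ is not left-regular but has degrees in $[d_{min},d_{max}]$, and its nonzero entries are in $\{1,2\}$; pushing these through the same argument uses $|A_{ij}w_j|\ge|w_j|$ on the left and $|A_{ij'}|\le2$, ``$j'$ in at most $d_{max}$ rows'' on the right, which multiplies the effective expansion requirement by a bounded factor of order $\rho$ (and the $2$). Hence $\|w_S\|_1 < \tfrac12\|w\|_1$ for all $|S| \leq \Theta(k)$ still holds provided $\epsilon$ is chosen smaller than an absolute constant over $\rho$; since $\epsilon$ is at our disposal while $\rho = \Theta(T(n))$ is fixed, this is achievable, at the cost of recovering a constant fraction $\Theta(k)$ of the sparsity rather than $k$ exactly. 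The null space property then gives that $\min\{\|\x\|_1 : A\x = \y\}$ returns the unique $\Theta(k)$-sparse solution for every $\Theta(k)$-sparse edge vector.

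The place I expect to spend the most care --- and the main obstacle --- is exactly this quantitative bookkeeping: the classical expander-to-$\ell_1$ lemmas are written for $0/1$, left-regular adjacency matrices, whereas here $d_{max}/d_{min} = \Theta(T(n))$ and entries can equal $2$. One must check that these two deviations only inflate the relevant constants by bounded factors and never swallow the $(1-2\epsilon)$ gain from expansion; this works because $\eta$, $\epsilon'$, and therefore $\epsilon$, can be made arbitrarily small independently of $\rho$, with the recoverable sparsity then a (constant) fraction of $k$ depending on $\rho$. For growing $T(n)$ the same scheme goes through with $t = O(|E|/(T(n)k))$ and $m = O(T^2(n)k\log n)$ as remarked earlier in the section; I would present the $T(n) = O(1)$ case in full and only indicate this extension.
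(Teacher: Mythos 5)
Your proposal is correct and follows essentially the same route as the paper: condition on the high-probability expansion and degree events from Theorems \ref{thm:walkexpansionforsmallk_1} and \ref{thm:sing_col_concentration}, then run the Berinde--Gilbert--Indyk--Karloff--Strauss expander-to-null-space argument, adapted for left degrees in $[d_{min},d_{max}]$ and entries in $\{1,2\}$, exactly the two adaptations the paper makes. The only real difference is internal to the deterministic step: you phrase it via the unique-neighbor count (at least $(1-2\epsilon)|E(S)|$ unique-neighbor edges, then cancellation on those rows integrated over level sets), whereas the paper derives the equivalent $\ell_1$-RIP-type bound $\|A_{K'}\w_{K'}\|_1\geq(d_{min}-4d_{max}\epsilon)\|\w_{K'}\|_1$ by sequential collision counting and then partitions the remaining coordinates into blocks of size $k'$; both variants come from the same reference and both yield a null space constant below $\tfrac{1}{2}$ once $\epsilon d_{max}/d_{min}$ is below an absolute threshold, which is achievable since $\epsilon$ is tunable while $d_{max}/d_{min}=\Theta(T(n))$ is bounded under the section's standing assumption.
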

\begin{proof} $\ell_1$ minimization recovers every $k'$-sparse vector if and only if every nonzero vector $\w \in \mathcal{N}(A)$, $\|\w_{K'}\|_{1} \leq \alpha \|\w\|_1 $ for any edge index set $K'$ with cardinality $\Theta(k)$, where $\alpha<\frac{1}{2}$. By Theorem \ref{thm:walkexpansionforsmallk_1}, the measurement matrix $A$ generated by $m=O(T(n)k\log(n))$ ``good start'' random walks of length $O(\frac{|E|}{k})$ corresponds to a bipartite  $(k,\epsilon)$ expander graph with high probability, where $\epsilon>0$ is a constant which can be made arbitrarily close to $0$ if we choose $t$ and $m$ appropriately. Now we show for such an $A$ with expansion, the null space requirement for $\ell_1$ success is satisfied for $|K'|=\Theta(k)$. The proof in this lemma follows the same line of reasoning as in \cite{Indyk}, except for taking care of the irregularities in uneven nonzero elements in $A$ and unequal degrees for left-hand side nodes. Thus the readers are encouraged to see \cite{Indyk} for more detailed explanations.

Let $K'$ be the index set of largest elements (in amplitude) in a nonzero vector $\w \in \mathcal{N}(A)$, with cardinality $|K'|=k'\leq \frac{k}{2}$. So they correspond to $k'$ ``edge'' nodes in the bipartite graph representation for $A$. We first argue that
\begin{equation}
\label{eq:isometry}
\|A_{K'}\w_{K'}\|_1 \geq (d_{min}-4d_{max}\epsilon) \|\w_{K'}\|_1.
\end{equation}
Let us imagine a bipartite graph for $A$, but with no links (between the lefthand nodes and righthand nodes) yet. Consider the following process of adding the links  to the left-hand ``edge'' node set $K'$ one by one. We start by adding the links to the lefthand ``edge'' node that corresponds to the largest element of $\w$ in amplitude, then the links corresponding to the second largest element of $\w$ in amplitude and so on. If a newly added link is connected to a righthand side ``measurement'' node that is already ``plugged in'' by some previously added links, we will call a ``collision" occurs. If there were no ``collisions'' occurring, $\|A_{K'}\w_{K'}\|_1$ will be at least $d_{min}\|\w_{K'}\|_1$. By the expansion property of the bipartite graph, when we are done adding the links of the left hand node corresponding to the $i$-th ($i \leq k$) largest element of $\w$ in amplitude, at most $\epsilon d_{max} i$ collisions occur. Since we already rank the elements of $\w$ in amplitude, by the triangular inequality of $\ell_1$ norm, these collisions will add up to at most $2\epsilon d_{max} \|\w_{K'}\|_1$ (the term $2$ comes from the fact that the elements in $A$ are upper bounded by $2$ via regularized random walks).This will result in a loss of at most $4\epsilon d_{max} \|\w_{K'}\|_1$ in $\|A_{K'}\w_{K'}\|_1$ by the triangular inequality for $\ell_1$ norm, which leads to (\ref{eq:isometry}) .

Now we partition the index set $\{1,2,...,|E|\}$ into $l$ subsets of size $k'$ (except for the last subsect) in an decreasing order of $\w$ (in amplitudes), where $l=\lceil\frac{|E|}{k'}\rceil$. Since $A\w=0$, \emph{over the set $N(S)$ of righthand ``measurement'' nodes that are connected to $K'$}, ($K_{0}\doteq K'$),
\begin{eqnarray*}
0&=& \|A_{K'}\w_{K'}+A_{K_1}\w_{K_1}+...+A_{K_{l}}\w_{l-1}\|_{1}\\
  &\geq & (d_{min}-4d_{max}\epsilon)\|\w_{K'}\|_1-4 \epsilon d_{max} k'\sum_{d=1}^{l-1} \frac{\|\w_{K_{d-1}}\|_1}{k'}\\
  &\geq & (d_{min}-4d_{max}\epsilon)\|\w_{K'}\|_1-4 \epsilon d_{max} \|\w\|_1,
\end{eqnarray*}
where the first inequality is due to the $(k,\epsilon)$ expansion property, (which results in at most $2d_{max}k'$ link ``collisions'' between any set $K_{l}$ and $K'$ ) and the upper bound $2$ for elements in $A$. Again, please refer to \cite{Indyk} for more explanations.
So in summary, for any nonzero $\w \in \mathcal{N}(A)$,
\begin{equation*}
\|\w_{K'}\|_1 \leq  \frac{4 \frac{ \epsilon d_{max}}{d_{min}}}{1-\frac{4\epsilon d_{max}}{d_{min}}} \|\w\|_1.
\end{equation*}

As long as $\frac{\epsilon d_{max}}{d_{min}}<\frac{1}{12}$, $\ell_1$ minimization can recover up to any $k'$-sparse signal via $O(T(n)k\log(n))$ measurements, where $k' \leq \frac{k}{2}$ (conditioned on expansion property for $A$ by setting $t$ and $m$ appropriately, which is possible from Theorem \ref{thm:walkexpansionforsmallk_1} and \ref{thm:sing_col_concentration}).
\end{proof}

%
%

\section{Numerical Examples}
\label{sec:numerical}
 In this section, we will provide numerical simulation results demonstrating the performance of compressive sensing over graphs. In all the simulations, we generate the the measurement matrix $A$ from independent random walks of certain lengths, subject to the graph topology constraints. \\

 \noindent \textbf{Example 1}
 Figure \ref{fig:n50complete} shows the recovery percentage of $\ell_1$ minimization for $k$-sparse edge signal vector over a complete graph with $50$ vertices and $1225$ edges. The $k$ edges with nonzero elements are uniform randomly chosen among the $1225$ edges. For this example, we take $m=612$ random walks of length $t=612$ to collect $612$ measurements. Two scenarios are considered. One is for the edge signal vectors with real-numbered nonzero Gaussian distributed elements, which can take positive and negative values. The other scenario is for vectors with nonnegative nonzero elements, for which we impose the nonnegative constraints in $\ell_1$ minimization decoding. Saving $50$ percent of measurements, $\ell_1$ minimization can recover real-numbered sparse vectors with $17$ percent nonzero elements or nonnegative sparse vectors with about $24$ percent nonzero elements, even under the graph constraints.\\


\begin{figure}
  \centering
  \includegraphics[width=0.40\textwidth]{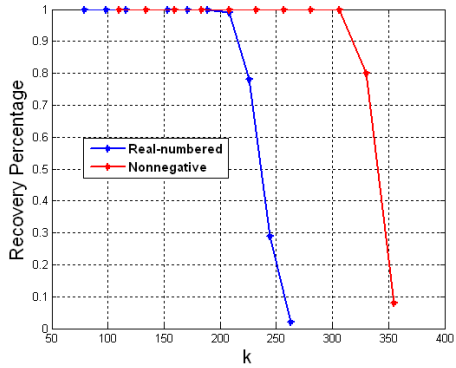}
   \caption{$n=50$ Complete Graph, with $t=612$ and $m=612$}
  \label{fig:n50complete}
\end{figure}
\noindent \textbf{Example 2}
In this example, we consider a random graph model of $50$ nodes, where there is an edge with probability $p=0.5$ between any two nodes. So on average, we have around 600 edges in the final graph. We tested $\ell_1$ decoding for real-numbered sparse signal recoveries in the same fashion as in Example 1. The length $t$ of each random walk is set as one third of $|E|$. In Figure \ref{fig:n50p05}, we plot the relationship between the number of measurements and the maximum recoverable sparsity $k$. A sparsity is deemed recoverable if 99 percent of $k$-sparse vectors have been recovered in the experiment.


\begin{figure}
  \centering
  \includegraphics[width=0.40\textwidth]{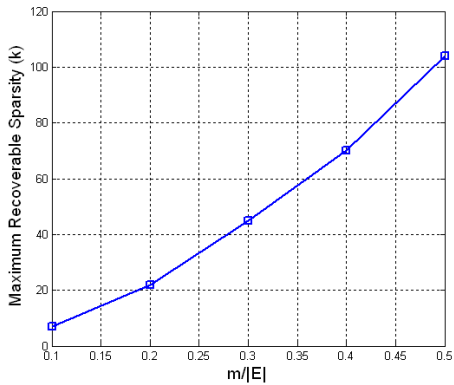}
   \caption{$n=50$ Random Graph}
  \label{fig:n50p05}
\end{figure}

\section{Conclusion}
\label{sec:conclusion}
 We study network tomography problems from the angle of compressive sensing. The unknown vectors to be recovered are sparse vectors representing certain parameters of the links over the graph. The collective additive measurements we are allowed to take must follow paths over the underlying graphs. For a sufficiently connected graph with $n$ node, we find that $O(k \log(n))$ path measurements are enough to recover any sparse link vector with no more than $k$ nonzero elements. We further demonstrate that $\ell_1$ minimization can be used to recover such sparse vectors here with theoretical guarantee. Further research is needed to find efficient ways to construct measurement paths. In addition, it is also of interest to investigate the possibility of using nonlinear measurements and low-rank matrix recovery \cite{yihongwu}\cite{zrwq}. So far we have only studied compressive sensing over graphs for ideally sparse signals and extensions to noisy measurements are part of future work. It is also interesting to consider more efficient polynomial-time algorithms for compressive sensing over graphs \cite{XuHassibi}.

\bibliographystyle{IEEEbib}

\end{document}